\newtheorem{theorem}{Theorem}
\newtheorem{proposition}[theorem]{Proposition}
\newtheorem{lemma}[theorem]{Lemma}
\newtheorem{corollary}[theorem]{Corollary}
\theoremstyle{remark}
\newtheorem{remark}{Remark}
\numberwithin{equation}{section}
\begin{document}

\title[Bispectral dual for the quantum Toda chain]
{Bispectral dual difference equations for the quantum Toda chain with boundary perturbations}

\author{J.F.  van Diejen}

\address{
Instituto de Matem\'atica y F\'{\i}sica, Universidad de Talca,
Casilla 747, Talca, Chile}

\email{diejen@inst-mat.utalca.cl}

\author{E. Emsiz}

\address{
Facultad de Matem\'aticas, Pontificia Universidad Cat\'olica de Chile,
Casilla 306, Correo 22, Santiago, Chile}
\email{eemsiz@mat.uc.cl}

\subjclass[2010]{33C67, 81Q80, 81R12.}
\keywords{quantum Toda chain, boundary interactions,  bispectral duality,  hyperoctahedral Whittaker functions, difference equations}

\thanks{This work was supported in part by the {\em Fondo Nacional de Desarrollo
Cient\'{\i}fico y Tecnol\'ogico (FONDECYT)} Grants  \# 1141114 and  \# 1170179.}

\date{August 2017}

\begin{abstract}
We show that hyperoctahedral Whittaker functions---diagonalizing an open quantum Toda chain with one-sided boundary potentials
of Morse type---satisfy a dual system of difference equations in the spectral variable.
This extends a well-known  bispectral duality between the nonperturbed open quantum Toda chain 
and a strong-coupling limit of the rational Macdonald-Ruijsenaars difference operators. It is manifest from the difference equations in question that  the hyperoctahedral Whittaker function is entire as a function of the spectral variable.
\end{abstract}

\maketitle

\section{Introduction}\label{sec1}
We consider hyperoctahedral Whittaker functions that arise as eigenfunctions for the Hamiltonian of an open quantum Toda chain with one-sided boundary potentials of Morse type:
\begin{align}\label{H}
\text{H}= -\frac{1}{2} \sum_{1\leq j\leq n} \frac{\partial^2}{\partial x_j^2}  
&+ e^{-x_1+x_2}+e^{-x_2+x_3}+\cdots +e^{-x_{n-1}+x_n}\\
&+a e^{-x_n}+b e^{-2x_n}. \nonumber
\end{align}
Here the independent variables $x_1,\ldots ,x_n$ represent one-dimensional particle positions and the constants $a, b $ denote two coupling parameters governing the perturbation of the chain at the boundary.
It is  known that the boundary potentials in question preserve the integrability of the Toda chain  \cite{skl:boundary,ino:finite} and that they deform
several integrable boundary perturbations stemming from the simple Lie algebras of classical type \cite{bog:perturbations,kos:quantization,ols-per:quantum,goo-wal:classical,sem:quantisation}.
Specifically, for $ab= 0$  the quantum Toda Hamiltonians associated with the Lie algebras of
type $A_{n-1}$ ($a= 0$, $b=  0$), type $B_n$ ($a\neq 0$, $b= 0$) and type $C_n$ ($a=  0$, $b\neq 0$) are recovered as degenerations.

Our main objective is to show that the hyperoctahedral Whittaker functions diagonalizing $\text{H}$ \eqref{H} satisfy a dual system of difference equations in the spectral variable, which
themselves can be interpreted as eigenvalue equations for a quantum integrable particle system. The commuting quantum integrals for this dual particle system turn out to be given by a strong-coupling limit of  rational Macdonald-Ruijsenaars type difference operators with hyperoctahedral symmetry introduced in \cite{die:integrability}.
This extends a corresponding bispectral duality \cite{dui-gru:differential,gru:bispectral} between the  nonperturbed open quantum Toda chain and a strong-coupling limit of the  (conventional) rational Macdonald-Ruijsenaars operators  \cite{kha-leb:integral,bab:equations,hal-rui:kernel,skl:bispectrality,koz:aspects,bor-cor:macdonald}.
At the level of classical mechanics, the latter bispectral duality for the (standard $A_{n-1}$ type) open Toda chain manifests itself in the fact that 
the action-angle transforms linearizing the corresponding classical-mechanical particle system and its dual are inverses of each other \cite{rui:relativistic,feh:action}.

For any  simple Lie algebra, the quantum Toda chain is diagonalized by an associated (class-one) Whittaker function
\cite{kos:quantization,goo-wal:classical,sem:quantisation,eti:whittaker}. These multivariate confluent hypergeometric functions have been studied extensively in the literature in diverse contexts, cf. e.g. 
\cite{jac:fonctions,has:whittaker,sta:explicit,sta:mellin,kha-leb:integral,ior-sha:wave,ish-sta:new,bau-oco:exponential,ger-leb-obl:new,rie:mirror,bor-cor:macdonald,cor-oco-sep-zyg:tropical,oco-sep-zyg:geometric,bru-bum-lic:whittaker} and references therein.
In relation to the quantum Toda chain, the Whittaker function arises e.g. via a connection formula that  both normalizes and symmetrizes the Harish-Chandra series solution of the eigenvalue problem (also known as the fundamental Whittaker function) \cite{has:whittaker,bau-oco:exponential}. 
It can moreover be
seen as a confluent limit \cite{shi:limit,osh-shi:heckman-opdam} of the Heckman-Opdam hypergeometric function (pertaining to the (reduced) root system of the underlying Lie algebra) \cite{hec-sch:harmonic,opd:lecture}.
With the aid of this kind of hypergeometric confluences,  dual
difference equations for the Heckman-Opdam hypergeometric function were recently seen to degenerate to corresponding difference equations for the Whittaker function \cite{die-ems:difference}.
By extending this scheme to the case of the Heckman-Opdam hypergeometric function associated with the
nonreduced root system of type $BC_n$, we arrive below at the hyperoctahedral Whittaker function diagonalizing $\text{H}$ \eqref{H} together with a corresponding system of dual difference equations in the spectral variable. As an application, the difference equations in question are employed
to infer that the dependence of the hyperoctahedral Whittaker function on the spectral variable is holomorphic everywhere.

The presentation is organized as follows. In Sections \ref{HC:sec}  and \ref{HWF:sec} the Harish-Chandra series and the hyperoctahedral Whittaker function for the quantum Toda Hamiltonian $\text{H}$ \eqref{H} are  constructed for generic values of the spectral variable. The associated
bispectral dual difference equations  are exhibited in Section \ref{bd:sec}, and in
Section \ref{ac:sec} these are subsequently used  to deduce that  the hyperoctahedral Whittaker function extends to an entire function  of the spectral variable.
Some of the more technical parts of our discusion are isolated from the main flow of the arguments and postponed towards the end.
Specifically, Section \ref{FW-function:prf} confirms
the convergence of the Harish-Chandra series for the fundamental Whittaker function, and Section \ref{regular-domain:prf} verifies that the function in question arises from the Harish-Chandra series for the hypergeometric equation of type $BC_n$ via a confluent limit of the kind considered by Shimeno and Oshima \cite{shi:limit,osh-shi:heckman-opdam}. In
Section \ref{HWD:prf}   this  same confluence is employed to retrieve the dual difference equations from their  hypergeometric counterparts in   \cite[Thm. 2]{die-ems:difference}.

\begin{remark}\label{BCn:rem} For $ab\neq 0$, the Morse potential at the boundary depends effectively  on a single nontrivial coupling parameter only.
Indeed, by translating the center of mass of the particle system ($x_j\to x_j+c$, $j=1,\ldots, n$), we may normalize  \emph{one} of either two coupling parameters $a$ or $b$ ($\neq 0$) to a fixed (nonzero) value of choice. Below we will assume that $b\neq 0$ and use the translational freedom to fix the strength of this coupling at
\begin{equation}
{b\equiv \frac{1}{8}}
\end{equation}
(while the parameter $a$ is allowed to vanish). Notice in this connection that in $\text{H}$ \eqref{H} the coupling strength of  the pair potential $e^{-x_k+x_{k+1}}$ ($k\in \{1,\ldots ,n-1\}$) was also conveniently normalized at unit value in a similar way, by exploiting the center-of-mass translational freedom
of the  particle cluster corresponding to the positions $x_1,\ldots ,x_k$. 
In all our formulas below we can in principle undo 
the above normalization of $b$ (by reverting the translation of the center of mass). With some care and assuming that $a\neq 0$, it is then straightforward to recuperate also the case of a vanishing coupling strength $b$ by performing the limit $b\to 0$.
\end{remark}

\section{Harish-Chandra series: fundamental Whittaker function}\label{HC:sec}
Following a classical approach going back to  Harish-Chandra---cf. e.g. Ref. \cite[Ch. IV.5]{hel:groups} and \cite[\S 4]{has:whittaker}---we  first construct a
power-series solution of the eigenvalue problem for $\text{H}$ \eqref{H}. 

Let us denote the unit vectors of the standard basis for $\mathbb{C}^n$ by  $e_1,\ldots ,e_n$ and let $\langle\cdot ,\cdot\rangle$ represent the natural bilinear scalar product
turning these into an orthonormal basis, i.e. for $\xi=(\xi_1,\ldots ,\xi_n)\in\mathbb{C}^n$ and $x=(x_1,\ldots ,x_n)\in\mathbb{C}^n$:
\begin{equation*}
\langle \xi ,x\rangle := \xi_1 x_1+\cdots +\xi_n x_n.
\end{equation*}
The Toda Laplacian with Morse perturbations is defined as
\begin{subequations}
\begin{equation}\label{L}
L_x:= \sum_{1\leq j\leq n} \frac{\partial^2}{\partial x_j^2}  -\sum_{\alpha\in S}    \text{a}_{\alpha}e^{-\langle \alpha ,x\rangle },
\end{equation}
where 
\begin{equation}\label{S}
S:=\{ e_1-e_2,e_2-e_3,\ldots ,e_{n-1}-e_n,e_n,2e_n\}
\end{equation} 
and (we have picked the normalization, cf. Remark \ref{BCn:rem} above)
\begin{equation}\label{A}
\text{a}_{e_1-e_2}=\cdots=\text{a}_{e_{n-1}-e_n}=2,\quad \text{a}_{e_n}=g,\quad \text{a}_{2e_n}=\frac{1}{4}.
\end{equation}
\end{subequations}
So $\text{H}=-\frac{1}{2}L_x$ with $a=\frac{g}{2}$ and $b=\frac{1}{8}$.

For $\nu=(\nu_1,\ldots ,\nu_n)\in\mathbb{Z}^n$, we write that
\begin{equation}\label{dominant}
\nu\geq 0 \Leftrightarrow \nu_1+\cdots +\nu_k\geq 0\ \text{for}\ k=1,\ldots ,n,
\end{equation}
and that $\nu >0$ if $\nu\geq 0$ and $\nu\neq 0$.
Given  a wave vector $\xi$ in the  dense domain of $\mathbb{C}^n$ of the form
\begin{equation}\label{U+}
\mathbb{U}^n :=\{ \xi\in\mathbb{C}^n\mid   \langle \nu-2\xi ,\nu\rangle\neq 0, \forall\nu>0 \} ,
\end{equation}
the Harish-Chandra series is now defined as the formal power series
\begin{subequations}
\begin{equation}\label{HCs:a}
\phi_\xi(x;g):=\sum_{ \nu\geq 0}  a_\nu (\xi ;g)  e^{\langle \xi -\nu ,x\rangle } ,
\end{equation}
with expansion coefficients determined by the recurrence
\begin{equation}\label{HCs:b}
a_\nu (\xi ;g) =  \frac{1}{  \langle \nu-2\xi ,\nu\rangle}   \sum_{\alpha\in S}  \text{a}_\alpha a_{\nu -\alpha} (\xi ;g)\qquad (\nu >0)
\end{equation}
and the initial conditions
\begin{equation}\label{HCs:c}
a_\nu (\xi ;g) := \begin{cases}   1&\text{if}\ \nu =0, \\ 0&\text{if}\ \nu\not\geq 0.\end{cases} 
\end{equation}
\end{subequations}
Notice that the uniqueness of the expansion coefficients $a_\nu (\xi ;g) $ \eqref{HCs:b}, \eqref{HCs:c} is guaranteed as the cone of integral wave vectors $\nu\geq 0$ is nonnegatively generated by $S$ \eqref{S}.

The following proposition confirms that the above Harish-Chandra series converges to an eigenfunction of $L_x$ \eqref{L}--\eqref{A}. Adopting standard terminology, we will refer to this eigenfunction as the \emph{fundamental Whittaker function}.

\begin{proposition}[Fundamental Whittaker Function]\label{FW-function:prp}
(i) The Harish-Chandra series $\phi_\xi(x;g)$ \eqref{HCs:a}--\eqref{HCs:c} converges absolutely and uniformly on compacts to a holomorphic function
of  $(\xi,x,g)\in\mathbb{U}^n\times\mathbb{C}^n\times \mathbb{C}$.

(ii) The (fundamental Whittaker) function in question provides an eigenfunction of the Toda Laplacian $L_x$ \eqref{L}--\eqref{A}:
\begin{subequations}
\begin{equation}\label{ef:eq}
L_x\phi_\xi (x;g) =\langle\xi,\xi\rangle \phi_\xi (x;g)
\end{equation}
that enjoys a plane-wave asymptotics of the form
\begin{equation}\label{ef:as}
\lim_{x\to +\infty} | \phi_{\xi }(x;g)- e^{\langle \xi, x\rangle} | =0 \quad\text{for}\quad \emph{Re}(\xi)=0,
\end{equation}
\end{subequations}
where the notation $x\to +\infty$ means that $x_k-x_{k+1}\to +\infty$ for $k=1,\ldots ,n$ (with the convention that $x_{n+1}\equiv 0$).
\end{proposition}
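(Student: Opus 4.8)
The plan is to establish (i) first and then read off (ii) from the resulting power-series representation. I would begin by recasting the Harish-Chandra series as an honest multivariate power series. Writing $m_k:=\nu_1+\cdots+\nu_k$ for $k=1,\ldots ,n$ sets up a bijection between $\{\nu\in\mathbb{Z}^n\mid \nu\geq 0\}$ and $\mathbb{Z}_{\geq 0}^n$ (with inverse $\nu_k=m_k-m_{k-1}$, $m_0:=0$), under which $\langle \nu ,x\rangle=\sum_{k<n}m_k(x_k-x_{k+1})+m_nx_n$. Hence, in the exponential coordinates $u_k:=e^{-(x_k-x_{k+1})}$ ($1\leq k<n$) and $u_n:=e^{-x_n}$, one has
\begin{equation*}
\phi_\xi(x;g)=e^{\langle \xi ,x\rangle}\sum_{m\in\mathbb{Z}_{\geq 0}^n}\tilde{a}_m(\xi;g)\, u_1^{m_1}\cdots u_n^{m_n},\qquad \tilde{a}_m:=a_\nu ,
\end{equation*}
with $\nu$ determined by $m$ as above. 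Since $u$ ranges over $(\mathbb{C}^\ast)^n$ as $x$ ranges over $\mathbb{C}^n$, absolute convergence for every $x\in\mathbb{C}^n$ is equivalent to the power series in $u$ being entire; I would therefore aim to bound $\tilde{a}_m$ so strongly that $\sum_m\tilde{a}_m u^m$ has infinite radius of convergence.

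The crux is a uniform lower bound on the denominator $\langle \nu-2\xi ,\nu\rangle$ appearing in \eqref{HCs:b}. Fix a compact $K\subset\mathbb{U}^n$ and a compact set of $g$-values. Because $\langle \nu ,\nu\rangle=\nu_1^2+\cdots+\nu_n^2$ is real and positive definite, the reverse triangle inequality and Cauchy--Schwarz give
\begin{equation*}
|\langle \nu-2\xi ,\nu\rangle|\geq \langle \nu ,\nu\rangle-2|\langle \xi ,\nu\rangle|\geq \langle \nu,\nu\rangle -2\Bigl(\sup_{\xi\in K}|\xi|\Bigr)\langle\nu,\nu\rangle^{1/2},
\end{equation*}
which is $\geq\tfrac12\langle\nu,\nu\rangle$ once $\langle\nu,\nu\rangle^{1/2}\geq 4\sup_K|\xi|$. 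For the finitely many $\nu>0$ below this threshold, $\langle \nu-2\xi ,\nu\rangle$ is a nonvanishing continuous function of $\xi$ on $K$ (nonvanishing precisely by the defining condition of $\mathbb{U}^n$), hence bounded away from $0$ by compactness. Combining the two regimes yields a constant $c>0$, uniform over $K$, with $|\langle \nu-2\xi ,\nu\rangle|\geq c(1+\langle\nu,\nu\rangle)$ for all $\nu>0$. Rewriting $\langle \nu,\nu\rangle=\sum_k(m_k-m_{k-1})^2$ as a positive definite form in $m$ gives $\langle\nu,\nu\rangle\geq \lambda(m_1^2+\cdots+m_n^2)$ for some $\lambda>0$; since $m_1^2+\cdots+m_n^2\geq p^2/n$ with $p:=m_1+\cdots+m_n$, the denominator grows quadratically in the total degree: $|\langle \nu-2\xi,\nu\rangle|\geq c'(1+p^2)$.

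With this in hand I would run a majorant argument on the total degree $p$. Each application of \eqref{HCs:b} lowers $p$ by $1$ (for $\alpha\in\{e_1-e_2,\ldots ,e_{n-1}-e_n,e_n\}$) or by $2$ (for $\alpha=2e_n$), so setting $B_p:=\max_{m_1+\cdots+m_n=p}|\tilde{a}_m|$ the recurrence yields $B_p\leq \bigl(A_1B_{p-1}+A_2B_{p-2}\bigr)/\bigl(c'(1+p^2)\bigr)$ with $A_1=2(n-1)+|g|$ and $A_2=\tfrac14$. A routine induction then gives $B_p\leq R^p/p!$ for a suitable $R>0$, the factorial being forced by the quadratic (rather than merely linear) growth of the denominator; this is exactly what upgrades convergence on a neighbourhood to entireness. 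Since the number of $m$ with $m_1+\cdots+m_n=p$ grows only polynomially in $p$, the bound $|\tilde a_m|\leq R^{p}/p!$ makes $\sum_m |\tilde a_m|\prod_k|u_k|^{m_k}$ converge for every $u$, uniformly for $(\xi,x,g)$ in compacts (the bounds above being uniform there). Holomorphy of the limit in $(\xi,x,g)$ then follows from the Weierstrass theorem, the partial sums being holomorphic on $\mathbb{U}^n\times\mathbb{C}^n\times\mathbb{C}$; this proves (i).

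Part (ii) falls out of the construction. Applying $L_x$ termwise---legitimate by the uniform convergence of the series and of its formal derivatives---and using $\sum_j\partial_{x_j}^2 e^{\langle \xi-\nu,x\rangle}=\langle \xi-\nu,\xi-\nu\rangle e^{\langle \xi-\nu,x\rangle}$ together with the reindexing $\nu\mapsto\nu-\alpha$ in the potential term, the coefficient of $e^{\langle \xi-\nu,x\rangle}$ in $L_x\phi_\xi-\langle\xi,\xi\rangle\phi_\xi$ is exactly $a_\nu\langle \nu-2\xi,\nu\rangle-\sum_{\alpha\in S}\mathrm{a}_\alpha a_{\nu-\alpha}$, which vanishes by \eqref{HCs:b}, \eqref{HCs:c}; this gives \eqref{ef:eq}. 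For the asymptotics \eqref{ef:as} I would use the power-series form $\phi_\xi-e^{\langle \xi,x\rangle}=e^{\langle \xi,x\rangle}\sum_{m\neq 0}\tilde{a}_m u^m$, where $|e^{\langle \xi,x\rangle}|=1$ when $\mathrm{Re}(\xi)=0$ and $x$ is real. As $x\to+\infty$ every coordinate $u_k\to 0$ (including $u_n=e^{-x_n}$, by the convention $x_{n+1}\equiv 0$), so the tail $\sum_{m\neq 0}\tilde{a}_m u^m$ tends to $0$ by dominated convergence against the majorant valid near $u=0$, establishing \eqref{ef:as}. The main obstacle throughout is the denominator estimate of the second step: one must simultaneously exploit quadratic dominance far out in the cone and the $\mathbb{U}^n$-condition near the origin and---crucially---transfer this quadratic growth into factorial decay of the coefficients, since only that yields an entire rather than locally convergent Whittaker function.
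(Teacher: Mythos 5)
Your proposal is correct and follows essentially the same route as the paper: your total degree $p=m_1+\cdots+m_n$ is exactly the paper's grading $\langle\nu,\rho\rangle$ with $\rho=(n,n-1,\ldots,1)$, your quadratic lower bound on $\langle\nu-2\xi,\nu\rangle$ (Cauchy--Schwarz far out in the cone plus compactness and the $\mathbb{U}^n$-condition on the finite remainder) is the paper's Lemma \ref{estimates:lem}(i), and the induction giving $B_p\leq R^p/p!$ together with the polynomial count of lattice points of fixed degree, termwise differentiation, and dominated convergence for the asymptotics all mirror Section \ref{FW-function:prf}. The reformulation as an entire power series in $u_k=e^{-(x_k-x_{k+1})}$ is a pleasant repackaging but not a different argument.
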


The proof of this proposition hinges on growth estimates for the expansion coefficients $a_\nu (\xi ;g)$ stemming from the recurrence relations that ensure the absolute convergence of the power series.
The reader is referred to Section \ref{FW-function:prf} below for the particular features of this (classical) argument in the present
setting.

In view of Proposition \ref{FW-function:prp}, it is immediate that---as a function of the spectral variable $\xi$---the fundamental Whittaker function
extends uniquely to a meromorphic function on $ \mathbb{C}^n$, with possible poles at hyperplanes belonging to the locally finite collection 
$\mathbb{C}^n\setminus\mathbb{U}^n$.
Most of these singularities  turn out to be removable. 

\begin{proposition}[Regularity Domain]\label{regular-domain:prp}
The fundamental Whittaker function  $\phi_\xi(x;g)  $ \eqref{HCs:a}--\eqref{HCs:c} extends uniquely to a holomorphic function
of  $(\xi,x,g)\in\mathbb{C}_{\emph{reg},+}^n\times\mathbb{C}^n\times \mathbb{C}$, where
\begin{equation}
\mathbb{C}_{\emph{reg},+}^n:=\{ \xi\in\mathbb{C}^n\mid 2\xi_j\not\in\mathbb{Z}_{> 0}, \,\xi_j\pm\xi_k \not\in\mathbb{Z}_{> 0}\, (j<k)\} .
\end{equation}
Moreover, as a (meromorphic) function of the spectral variable $\xi\in\mathbb{C}^n$ the function under consideration has  at most simple poles along the hyperplanes belonging to $\mathbb{C}^n\setminus \mathbb{C}_{\emph{reg},+}^n$.
\end{proposition}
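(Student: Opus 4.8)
The plan is to derive both assertions from the corresponding, already well-understood, properties of the Harish-Chandra series for the hypergeometric system of type $BC_n$, transported through the confluent (Shimeno--Oshima) limit that is set up in Section \ref{regular-domain:prf}. It helps first to reinterpret the target: the exceptional hyperplanes comprising $\mathbb{C}^n\setminus\mathbb{C}^n_{\mathrm{reg},+}$ are precisely the affine hyperplanes $\langle\xi,\alpha^\vee\rangle\in\mathbb{Z}_{>0}$ as $\alpha$ ranges over the positive roots of $BC_n$, with the short roots $e_j$ contributing the sharpest condition $2\xi_j\in\mathbb{Z}_{>0}$. By contrast, the recurrence \eqref{HCs:b} renders $a_\nu(\xi;g)$ singular only along $\langle\nu-2\xi,\nu\rangle=0$, that is along $\langle\xi,\nu^\vee\rangle=1$ with $\nu^\vee:=2\nu/\langle\nu,\nu\rangle$. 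The content of the proposition is thus that every such hyperplane with $\nu$ not a positive multiple of a root is in fact removable, while those with $\nu=k\alpha$ (so that $\langle\xi,\nu^\vee\rangle=1$ becomes $\langle\xi,\alpha^\vee\rangle=k$) contribute at most a simple pole.

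First I would recall that for the $BC_n$ hypergeometric Harish-Chandra series the analogous statement is a theorem of Heckman and Opdam: as a function of the spectral parameter $\lambda$ it is holomorphic off the root hyperplanes $\langle\lambda,\alpha^\vee\rangle\in\mathbb{Z}_{>0}$ and has at most simple poles along them. I would then invoke the confluence of Section \ref{regular-domain:prf}, which realizes $\phi_\xi(x;g)$ as a limit of a rescaled $BC_n$ Harish-Chandra series in which the multiplicity parameters together with the spectral and position variables are driven to infinity along the Shimeno--Oshima scaling. Tracking the image of the $BC_n$ root hyperplanes under this rescaling, one verifies that the admissible region $\langle\lambda,\alpha^\vee\rangle\notin\mathbb{Z}_{>0}$ degenerates exactly onto $\mathbb{C}^n_{\mathrm{reg},+}$. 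Granting that the confluent limit is attained locally uniformly on $\mathbb{C}^n_{\mathrm{reg},+}\times\mathbb{C}^n\times\mathbb{C}$, holomorphy passes to the limit by the Weierstrass convergence theorem, which is the first assertion; the simple-pole claim then follows by applying the same uniform convergence to the product of $\phi_\xi$ with $\langle\xi,\alpha^\vee\rangle-k$ and checking that this product remains holomorphic across the hyperplane $\langle\xi,\alpha^\vee\rangle=k$.

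The hard part will be precisely this transfer of the pole data through the limit: a limit of holomorphic families can lose holomorphy, or create poles of higher order, if the convergence degrades near the exceptional set, so one must secure convergence that is uniform up to and transverse to each root hyperplane and check that the pole orders do not inflate as the scaling parameter reaches its limit. A self-contained alternative bypasses the hypergeometric input and argues straight from \eqref{HCs:b}: building on Proposition \ref{FW-function:prp}, one shows by induction along the partial order \eqref{dominant} that each $a_\nu(\xi;g)$ is rational in $\xi$ with poles confined to the hyperplanes $\langle\xi,\alpha^\vee\rangle=j$, the inductive step demanding that the numerator $\sum_{\alpha\in S}\text{a}_\alpha\, a_{\nu-\alpha}(\xi;g)$ vanish on $\langle\xi,\nu^\vee\rangle=1$ whenever $\nu$ is not a root multiple. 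Cancelling these spurious denominators is the crux, and it is cleanest to organize via residues: taking the residue of the eigenvalue equation \eqref{ef:eq} along the hyperplane $\langle\xi,\nu^\vee\rangle=1$ shows that the residue of $\phi_\xi(x;g)$ is again a lower-triangular eigenfunction of $L_x$ with the same eigenvalue and leading exponential $e^{\langle\xi-\nu,x\rangle}$; since $\langle\xi-\nu,\xi-\nu\rangle=\langle\xi,\xi\rangle$ there, uniqueness of such solutions forces this residue to be a multiple of $\phi_{\xi-\nu}(x;g)$, and the shift $\xi\mapsto\xi-\nu$ restricts to a Weyl reflection on the hyperplane exactly when $\nu$ is proportional to a root, the remaining cases being those in which the residue must vanish.
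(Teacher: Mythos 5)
Your primary strategy---import Opdam's regularity theorem for the $BC_n$ Harish-Chandra series and push it through the Shimeno--Oshima confluence---is exactly the paper's, and your identification of the exceptional hyperplanes (the short roots $e_j$ giving the sharper condition $2\xi_j\in\mathbb{Z}_{>0}$ that subsumes the one from $2e_j$, plus $\xi_j\pm\xi_k\in\mathbb{Z}_{>0}$) is correct. The gap lies in the step you yourself flag as the hard part and then ``grant'': locally uniform convergence of the confluent limit on all of $\mathbb{C}^n_{\mathrm{reg},+}$. That set contains points of $\mathbb{C}^n\setminus\mathbb{U}^n$ where the coefficients $a_\nu(\xi;g)$ are a priori singular and the series \eqref{HCs:a} is not even defined, so securing uniform convergence up to those hyperplanes is essentially equivalent to the proposition itself; Weierstrass cannot be invoked before that is done. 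The paper's device is to fix a bounded domain $U\subset\mathbb{C}^n$ and multiply by the finite polynomial $\Delta_U(\xi)=\prod_{\mu>0,\, H_\mu\cap\overline{U}\neq\emptyset}\langle\mu-2\xi,\mu\rangle$, where $H_\mu=\{\xi\mid\langle\mu-2\xi,\mu\rangle=0\}$. This clears every potential denominator of every coefficient on $U$, so the normalized series $\Delta_U(\xi)\phi_\xi(x;g)$ is holomorphic on $U\times\mathbb{C}^n\times\mathbb{C}$ directly from the recurrence and the growth estimates of Section \ref{FW-function:prf}---no limit is needed for holomorphy. The confluence (Proposition \ref{HCcs-lim:prp}) is then required only pointwise, via dominated convergence, and is used solely to transfer the \emph{vanishing} of $\Delta_U(\xi)\phi^{\mathrm{cs}}_\xi$ along the non-exceptional hyperplanes $U\cap H_\mu$ (which holds by Opdam's theorem, since $\Delta_U$ carries the factor $\langle\mu-2\xi,\mu\rangle$ and $\phi^{\mathrm{cs}}_\xi$ is regular there) to the vanishing of $\Delta_U(\xi)\phi_\xi(x;g)$; dividing out the simple zeros of $\Delta_U$ then yields both the removability and the at-most-simple-pole assertions at once.

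Your self-contained alternative also stalls at precisely its crux: knowing that $\mathrm{Res}_{H_\nu}\phi_\xi$ is an eigenfunction with the same eigenvalue and leading exponential $e^{\langle\xi-\nu,x\rangle}$, hence a multiple of $\phi_{\xi-\nu}$ for generic $\xi$ on the hyperplane, does not force that multiple to vanish when $\nu$ is not a positive multiple of a root. The observation that $\xi\mapsto\xi-\nu$ restricts to a reflection only when $\nu$ is proportional to a root distinguishes the two cases but supplies no vanishing mechanism in the remaining one, which is exactly the nontrivial content being sought; some additional input (in the paper, Opdam's theorem transported through the confluence) is unavoidable there.
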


A corresponding regularity result for the Harish-Chandra series solving the hypergeometric equation associated with root systems was proven by Opdam,
cf. \cite[Cor. 2.2,\, Cor. 2.10]{opd:root}, \cite[Prp. 4.2.5]{hec-sch:harmonic}, \cite[Lem. 6.5]{opd:lecture} and \cite[Thm. 1.2]{nar-pas-pus:asymptotics}.
In Section \ref{regular-domain:prf} we will deduce Proposition \ref{regular-domain:prp} from Opdam's result with the aid of a hypergeometric confluence in the spirit of \cite{shi:limit,osh-shi:heckman-opdam}.

\section{Connection formula: hyperoctahedral Whittaker function}\label{HWF:sec}
Upon normalizing with an appropriate $c$-function, the
hyperoctahedral Whittaker function is built from the fundamental Whittaker function through symmetrization with respect to the hyperoctahedral group of signed permutations (acting on the spectral variable).
For the quantum Toda systems associated with the simple Lie algebras, an analogous construction of the corresponding Whittaker functions was carried out in Hashizume's seminal paper \cite{has:whittaker} (cf. also \cite[Sec. 4]{bau-oco:exponential}). Specifically, our formulas provide a parameter-deformation ($ab=0\to ab\neq 0$)  that unifies the Hashizume wave functions  for the quantum Toda Hamiltonians of type $B_n$ and $C_n$ (cf. Remark \ref{BCn:rem} above).

We consider the following action of the {\em hyperoctahedral group} $W:=S_n\ltimes  \{1 ,-1\}^n $
 of {\em signed permutations} $w=(\sigma ,\epsilon)$ on $\mathbb{C}^n$:
\begin{equation}
\xi=(\xi_1,\ldots ,\xi_n)\stackrel{w}{\longrightarrow} (\epsilon_1 \xi_{\sigma_1},\ldots ,\epsilon_n \xi_{\sigma_n})=:w\xi ,
\end{equation}
where $\sigma = \left( \begin{matrix} 1& 2& \cdots & n \\
 \sigma_1&\sigma_2&\cdots & \sigma_n
 \end{matrix}\right) $ belongs to the symmetric group $S_n$ and $\epsilon = (\epsilon_1,\ldots,\epsilon_n)$ with $\epsilon_j\in \{ 1,-1\}$ (for $j=1,\ldots, n$).
 Let  $\mathbb{C}_{\text{reg}}^n$ denote the dense domain of $\mathbb{C}^n$ of the form
 \begin{equation}\label{Creg}
 \mathbb{C}_{\text{reg}}^n:=\{ \xi\in\mathbb{C}^n\mid 2\xi_j\not\in\mathbb{Z}, \,\xi_j\pm\xi_k \not\in\mathbb{Z} \, (j<k)\}  .
 \end{equation}
For $(\xi,x,g)\in\mathbb{C}_{\text{reg}}^n\times\mathbb{C}^n\times \mathbb{C}$,   the \emph{hyperoctahedral Whittaker function} is now defined via the following \emph{connection formula}:
\begin{subequations}
\begin{equation}\label{connection-formula}
\Phi_\xi(x;g)   :=   \sum_{w\in W}   C(w\xi;g)  \phi_{w\xi}(x;g) ,
\end{equation}
\begin{equation}\label{c-function}
C(\xi;g) := \prod_{1\leq j\leq n}    \frac{\Gamma (2\xi_j)}{\Gamma (\frac{1}{2} +g +\xi_j)}
\prod_{1\leq j<k\leq n}   \Gamma (\xi_j+\xi_k )\Gamma (\xi_j-\xi_k) ,
\end{equation}
\end{subequations}
where $\Gamma (\cdot )$ refers to the gamma function \cite[Ch. 5]{olv-loz-boi-cla:nist}.  Notice that the $c$-function $C(\xi;g)$ \eqref{c-function} is regular
for $(\xi,g)\in\mathbb{C}_{\text{reg},-}\times\mathbb{C}$,
where
\begin{equation}\label{C-}
\mathbb{C}_{\text{reg},-}^n:=\{ \xi\in\mathbb{C}^n\mid 2\xi_j\not\in\mathbb{Z}_{\leq 0}, \,\xi_j\pm\xi_k \not\in\mathbb{Z}_{\leq 0}\, (j<k)\} 
\end{equation}
(so $\mathbb{C}_{\text{reg}}^n = \mathbb{C}_{\text{reg},+}^n\cap \mathbb{C}_{\text{reg},-}^n $), whence $\Phi_\xi(x;g) $ is a holomorphic function
of  $(\xi,x,g)\in\mathbb{C}_{\text{reg}}^n\times\mathbb{C}^n\times \mathbb{C}$.   Moreover, since the simple poles  along the hyperplanes $\xi_j=0$ and $\xi_j\pm \xi_k=0$ ($1\leq j\neq k\leq n$) stemming from the $c$-functions are removable in the final expression due to the hyperoctahedral symmetry, the regularity of $\Phi_\xi (x;g)$ is manifest  (in particular) when $\text{Re}(\xi)=0$ (cf. also Theorem \ref{ac-wf:thm} below for a much stronger statement ensuring the regularity of the hyperoctahedral Whittaker function for all values of $\xi\in\mathbb{C}^n$).

\begin{remark}\label{Toda-wavefunction:rem}
It is immediate from Proposition \ref{FW-function:prp} that the hyperocthedral Whittaker function
solves the eigenvalue equation for the Toda Laplacian $L_x$ \eqref{L}--\eqref{A}:
\begin{subequations}
\begin{equation}
L_x\Phi_\xi (x;g)  =\langle\xi,\xi\rangle \Phi_\xi (x;g) .
\end{equation}
By construction, this particular solution is
$W$-invariant  in the spectral variable $\Phi_{w\xi }(x;g)=\Phi_\xi (x;g)$, $\forall w\in W$, and it enjoys the following plane-wave asymptotics determined by the $c$-function $C(\xi;g)$  \eqref{c-function}:
\begin{equation}
\lim_{x\to +\infty}  \Bigl|  \Phi_{\xi }(x;g)-\sum_{w\in W} C(w\xi;g) e^{\langle w\xi, x\rangle}\Bigr| =0\quad\text{for}\quad\text{Re}(\xi)=0.
\end{equation}
\end{subequations}
\end{remark}

\section{Bispectral duality: difference equations}\label{bd:sec}
The (class-one) Whittaker functions diagonalizing the quantum Toda chains associated with the simple Lie algebras satisfy an explicit system of difference equations in the spectral variable \cite[Thm. 3]{die-ems:difference}. The identities in question generalize previously known difference equations for the Whittaker function associated with the Lie algebra of type $A_{n-1}$ \cite{kha-leb:integral,bab:equations,skl:bispectrality,koz:aspects,bor-cor:macdonald}. Our main result consists of the following system of difference equations for the hyperoctahedral Whittaker function diagonalizing the Toda Laplacian with Morse perturbation $L_x$ \eqref{L}--\eqref{A}. The proof of these difference equations is relegated to Section \ref{HWD:prf} below.

\begin{theorem}[Dual Difference Equations]\label{HWD:thm}
For any $(\xi,x,g)\in \mathbb{C}_{\emph{reg}}^n\times\mathbb{C}^n\times\mathbb{C}$ and $\ell \in \{ 1,\ldots ,n\}$,
the hyperoctahedral Whittaker function $\Phi_\xi(x;g)$ \eqref{connection-formula}, \eqref{c-function} satisfies the difference equation
\begin{subequations}
\begin{equation}\label{CHDE}
\sum_{\substack{J\subset \{ 1,\ldots ,n\} ,\, 0\leq|J|\leq \ell\\
               \epsilon_j \in \{ 1,-1\} ,\; j\in J}}
\!\!\!\!\!\!\!\!\!
U_{J^c,\, \ell -|J|}(\xi;g)
V_{\epsilon J}(\xi;g)
\Phi_{\xi +e_{\epsilon J}} (x;g ) =    e^{x_1+\cdots +x_\ell }\Phi_{\xi} (x;g) ,
\end{equation}
where
\begin{align}\label{V}
V_{\epsilon J}(\xi ;g)&:=
\prod_{j\in J} 
\frac{(\frac{1}{2}+g+\epsilon_j\xi_j)}{2\xi_j(2\xi_j+\epsilon_j)}
\prod_{\substack{j\in J\\ k\not\in J}} 
  (\xi_j^2-\xi_{k}^2)^{-1}
\nonumber \\
&  \times
\prod_{\substack{j,j^\prime \in J\\ j<j^\prime}}
(\epsilon_j\xi_j+\epsilon_{j^\prime}\xi_{j^\prime})^{-1}
(1+\epsilon_j\xi_j+\epsilon_{j^\prime}\xi_{j^\prime})^{-1}
\end{align}
and
\begin{align}\label{U}
U_{K,p}(\xi;g):= (-1)^{p(p+1)/2}
\sum_{\stackrel{I\subset K,\, |I|=p}
               {\epsilon_i  \in \{ 1,-1\} ,\; i\in I }}
&\Biggl( \prod_{i\in I} 
\frac{\frac{1}{2}+g+\epsilon_i\xi_i}{2\xi_i(2\xi_i+\epsilon_i)} \prod_{\substack{i\in I\\ k\in K\setminus I}} 
(\xi_i^2-\xi_{k}^2)^{-1}
 \\
&  \times
\prod_{\substack{i,i^\prime \in I\\ i<i^\prime}}
(\epsilon_i\xi_i+\epsilon_{i^\prime}\xi_{i^\prime})^{-1}
(1+\epsilon_i\xi_i+\epsilon_{i^\prime}\xi_{i^\prime})^{-1} \Biggr) . \nonumber \end{align}
\end{subequations}
Here $|J|$ denotes the cardinality of $J\subset\{ 1,\ldots, n\}$,  $J^c:=\{ 1,\ldots, n\}\setminus J$, and we have employed the compact notation
$e_{\epsilon J} := \sum_{j\in J} \epsilon_j e_j $.
\end{theorem}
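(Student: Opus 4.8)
The plan is to obtain the difference equation \eqref{CHDE} as a confluent degeneration of the dual difference equations for the Heckman--Opdam hypergeometric function of type $BC_n$ established in \cite[Thm. 2]{die-ems:difference}. The guiding principle is precisely the hypergeometric-to-Whittaker limit already exploited in Proposition \ref{regular-domain:prp}: upon scaling the geometric variable and letting the $BC_n$ multiplicity parameters tend to infinity in the manner of Shimeno and Oshima \cite{shi:limit,osh-shi:heckman-opdam}, the trigonometric Macdonald--Ruijsenaars structure on the spectral side degenerates to the rational structure appearing in \eqref{V}--\eqref{U}, while the hypergeometric $c$-function degenerates to the Toda $c$-function \eqref{c-function} and the scaled hypergeometric wave function degenerates to $\Phi_\xi(x;g)$ \eqref{connection-formula}. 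Since the hypergeometric identity is already known, this route transports the result to the Whittaker level rather than re-deriving it from scratch.

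Concretely, I would first recall from \cite[Thm. 2]{die-ems:difference} the hypergeometric dual difference equation; it carries exactly the combinatorial shape of \eqref{CHDE}, with a finite sum over subsets $J\subset\{1,\ldots,n\}$ and sign patterns $\epsilon$, coefficients $U_{K,p}$ and $V_{\epsilon J}$ assembled from trigonometric factors in the multiplicity parameters, and a right-hand side given by multiplication by a trigonometric monomial that confluences to $e^{x_1+\cdots+x_\ell}$. Second, I would substitute the confluence scaling and pass to the limit term by term: because the sum over $(J,\epsilon)$ is finite, interchanging limit and sum is unproblematic, and the work is to check, factor by factor, that each trigonometric coefficient converges to its rational counterpart in \eqref{V}--\eqref{U} and that the shifted arguments $\xi+e_{\epsilon J}$ survive the scaling intact. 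Third, invoking the wave-function confluence---so that the shifted scaled hypergeometric functions tend to $\Phi_{\xi+e_{\epsilon J}}(x;g)$---identifies the limiting identity with \eqref{CHDE}, valid as an identity of holomorphic functions on $\mathbb{C}_{\text{reg}}^n$ by Propositions \ref{FW-function:prp} and \ref{regular-domain:prp}. A direct verification is also conceivable---substituting the connection formula \eqref{connection-formula} and matching the coefficients of each exponential $e^{\langle w\xi-\nu,x\rangle}$ in the plane-wave expansions of both sides through the recurrence \eqref{HCs:b}---but this reduces, term by term, to precisely the algebraic identities that the hypergeometric computation already packages, so the confluence is far more economical.

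The main obstacle I anticipate is bookkeeping in the confluence rather than conceptual difficulty. One must pin down the exact scaling of the $BC_n$ multiplicities so that (a) the divergent prefactors carried by the individual $c$-functions cancel between the two sides of the identity, and (b) the boundary data degenerate correctly---the multiplicity parameters must collapse so as to reproduce the coupling $g$ (through the factor $\Gamma(\tfrac12+g+\xi_j)^{-1}$ in \eqref{c-function} and the factor $\tfrac12+g+\epsilon_j\xi_j$ in \eqref{V}) together with the fixed normalization $\text{a}_{2e_n}=\tfrac14$ from \eqref{A} (cf. Remark \ref{BCn:rem}). Keeping track of which trigonometric factor degenerates to which rational factor, and checking that the leading exponential asymptotics on both sides match under the scaling, is the delicate part; however, once the scaling is fixed this becomes a finite and explicit computation, which I would carry out in Section \ref{HWD:prf}.
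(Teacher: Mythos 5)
Your proposal follows essentially the same route as the paper's own proof: Section \ref{HWD:prf} indeed derives \eqref{CHDE} by substituting the confluence scaling \eqref{rescale}, \eqref{translate} into the hypergeometric difference equations of \cite[Thm. 2]{die-ems:difference}, multiplying by the normalizing factor $e^{-\frac{c}{2}\ell(2n+1-\ell)}\gamma(k_r^{(c)})$, taking $c\to+\infty$ term by term via Proposition \ref{CFcs-lim:prp} and elementary coefficient limits, and finally extending from $x\in\mathbb{A}^n$ to all of $\mathbb{C}^n$ by analytic continuation. The bookkeeping issues you flag (cancellation of divergent $c$-function prefactors, matching of the exponential scaling on both sides) are exactly the points the paper handles with Lemma \ref{Ccs-lim:lem} and the identity $|J|(2n+1-|J|)+p(2|K|+1-p)=\ell(2n+1-\ell)$.
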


For $\ell =1$  the difference equation in Theorem \ref{HWD:thm} boils down to the following identity
\begin{subequations}
\begin{equation}\label{HWD1:a}
\sum_{\substack{ 1\leq j\leq n\\  \epsilon \in \{ 1,-1\} }}  V_j(\epsilon \xi;g) \Bigl( \Phi_{\xi +\epsilon e_j}(x;g)-\Phi_\xi (x;g)\Bigl) 
 = e^{x_1} \Phi_\xi (x;g) ,
\end{equation}
where
\begin{equation}\label{HWD1:b}
V_j(\xi;g)=\frac{\frac{1}{2}+g+\xi_j}{2\xi_j(2\xi_j+1)}
\prod_{\substack{1\leq k\leq n\\ k\neq j}}  (\xi_j^2-\xi_{k}^2)^{-1} .
\end{equation}
\end{subequations}

\begin{remark}
Theorem \ref{HWD:thm} reveals that $\Phi_\xi(x;g)$ \eqref{connection-formula}, \eqref{c-function} constitutes a joint eigenfunction for a quantum integrable system of discrete difference operators acting on meromorphic functions of the spectral variable $\xi$:
\begin{equation}\label{Dl}
D_{\xi,\ell} := \sum_{\substack{J\subset \{ 1,\ldots ,n\} ,\, 0\leq|J|\leq \ell\\
               \epsilon_ j\in \{ 1,-1\} ,\; j\in J}}
\!\!\!\!\!\!\!\!\!
U_{J^c,\, \ell -|J|}(\xi;g)
V_{\epsilon J}(\xi;g)  T_{\epsilon J,\xi}\quad (\ell =1,\ldots ,n),
\end{equation}
where $(T_{\epsilon J,\xi} f)(\xi):= f(\xi +e_{\epsilon J})$ for $f:\mathbb{C}^n\to \mathbb{C}$.
Indeed, the  difference  operators $D_{\xi,1},\ldots ,D_{\xi,n}$ \eqref{Dl} amount to a strong-coupling limit of the
rational commuting hyperoctahedral Macdonald-Ruijsenaars type operators introduced in \cite[Sec. II]{die:integrability} (cf. also \cite[Eq. (2.10)]{die:difference}). This identifies the latter integrable quantum system as the bispectral dual \cite{gru:bispectral} of the quantum Toda system with Morse term $L_x$ \eqref{L}--\eqref{A}, via the following (bispectral)
extension of the eigenvalue equation in Remark \ref{Toda-wavefunction:rem}:
\begin{equation}
L_x\Phi_\xi (x;g) =\langle \xi ,\xi\rangle \Phi_\xi (x;g), \quad D_{\xi,\ell}\Phi_\xi (x;g) = e^{x_1+\cdots +x_\ell } \Phi_\xi (x;g) 
\end{equation}
($\ell=1,\ldots ,n$).
\end{remark}

\begin{remark} It is instructive to detail the above formulas somewhat further in the classical situation of a single particle ($n=1$).
It is then well-known---cf.  e.g.  Ref. \cite{lag:schrodinger} and references therein---that  for the corresponding Schr\"odinger eigenvalue problem  on the line with a Morse  potential
\begin{equation}
 L_x\phi=\xi^2\phi \quad \text{with}\quad  L_x=\frac{\text{d}^2}{\text{d}x^2}-ge^{-x}-\frac{1}{4}e^{-2x} ,
 \end{equation}
the unique solution characterized by a normalized plane-wave asymptotics of the form   $\lim_{x\to +\infty}    | \phi  (x)-e^{\xi x}|= 0$ for $\text{Re}(\xi)=0$ can be conveniently expressed explicitly in terms of Whittaker's (fundamental) $M$-function \cite[13.14.2]{olv-loz-boi-cla:nist}:
\begin{equation}\label{Mf}
\phi_\xi (x;g) = e^{\frac{x}{2}} M_{-g,-\xi} (e^{-x})=  e^{\xi x } e^{-\frac{1}{2}e^{-x}}  {}_1F_1 (\frac{1}{2}+g-\xi,1-2\xi ;e^{-x}) 
\end{equation}
($2\xi\not\in\mathbb{Z}_{>0}$). From the connection formula \eqref{connection-formula}, \eqref{c-function}, it is now manifest
that  for $n=1$ our hyperoctahedral Whittaker function reduces essentially to Whittaker's (class-one) $W$-function \cite[13.14.33]{olv-loz-boi-cla:nist}:
\begin{equation} \label{Wf}
\Phi_\xi (x;g) 
=\frac{\Gamma (2\xi)}{\Gamma (\frac{1}{2}+g+\xi)}  \phi_\xi (x;g)+ \frac{\Gamma (-2\xi)}{\Gamma (\frac{1}{2}+g-\xi)}  \phi_{-\xi }(x;g) \\
= e^{\frac{x}{2}}  W_{-g,\xi}(e^{-x}) 
\end{equation}
($2\xi\not\in\mathbb{Z}$). In the present univariate case, Theorem \ref{HWD:thm} specializes to the following elementary difference equation for this 
celebrated Whittaker function (cf. Eqs. \eqref{HWD1:a}, \eqref{HWD1:b}):
\begin{align}
&\frac{\frac{1}{2}+g+\xi}{2\xi (2\xi+1)} \Bigl( \Phi_{\xi +1}(x;g) - \Phi_\xi (x;g)\Bigr) +   \label{de}  \\
&\frac{(\frac{1}{2}+g-\xi)}{2\xi (2\xi-1)} \Bigl( \Phi_{\xi -1}(x;g)-\Phi_\xi (x;g)\Bigr) =
e^{x} \Phi_\xi (x;g) . \nonumber 
\end{align}
At $g=0$, the latter identity recovers a classical recurrence relation for Macdonald's modified Bessel function
$\Phi_\xi (x,0)= e^{\frac{x}{2}}  W_{0,\xi}(e^{-x}) = \frac{1}{\sqrt{\pi}}   K_\xi (\frac{1}{2}e^{-x})$ \cite[10.29.1, 13.18.9]{olv-loz-boi-cla:nist}:  \begin{equation}
\frac{1}{4\xi}    \Bigl( \Phi_{\xi +1}(x;0) - \Phi_{\xi-1} (x;0)\Bigr)=e^x \Phi_\xi (x;0) .
\end{equation}
For general $g$, the difference equation in Eq.  \eqref{de} can be retrieved by combining
the recurrence relations in  \cite[13.15.10, 13.15.12]{olv-loz-boi-cla:nist} (we thank T.H. Koornwinder for pointing this out).
\end{remark}

\section{Analytic continuation in the spectral variable}\label{ac:sec}
In this section we employ the difference equations of Theorem \ref{HWD:thm} to extend the hyperoctahedral Whittaker function analytically in the spectral variable to an entire function of $(\xi,x,g)\in\mathbb{C}^n\times\mathbb{C}^n\times\mathbb{C}$.

\subsection{Regularity}
Specifically, by inspecting the singularities of the difference equations we will read-off the following regularity  of $\Phi_\xi(x;g)$ in the spectral variable.

\begin{theorem}[Analyticity of the Hyperoctahedral Whittaker Function]\label{ac-wf:thm} 
The singularities of the
hyperoctahedral Whittaker function in the spectral variable $\xi\in\mathbb{C}^n$ along the hyperplanes of $\mathbb{C}^n\setminus\mathbb{C}^n_{\emph{reg}}$ are removable, i.e. by Hartogs' theorem $\Phi_\xi (x;g)$ \eqref{connection-formula}, \eqref{c-function} extends (uniquely) to an entire function of $(\xi,x,g)\in\mathbb{C}^n\times\mathbb{C}^n\times\mathbb{C}$.
\end{theorem}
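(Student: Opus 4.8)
The plan is to leverage the bispectral difference equations of Theorem \ref{HWD:thm} to remove, one hyperplane family at a time, the apparent singularities of $\Phi_\xi(x;g)$ along $\mathbb{C}^n\setminus\mathbb{C}^n_{\text{reg}}$. By Proposition \ref{regular-domain:prp} combined with the $c$-function discussion in Section \ref{HWF:sec}, we already know that $\Phi_\xi(x;g)$ is holomorphic on the smaller domain $\mathbb{C}^n_{\text{reg}}$, so it extends to a meromorphic function of $\xi$ on all of $\mathbb{C}^n$ with possible poles confined to the locally finite collection of hyperplanes making up $\mathbb{C}^n\setminus\mathbb{C}^n_{\text{reg}}$, namely $2\xi_j\in\mathbb{Z}$ and $\xi_j\pm\xi_k\in\mathbb{Z}$ ($j<k$). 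First I would fix attention on the simplest equation, the $\ell=1$ case \eqref{HWD1:a}--\eqref{HWD1:b}, and read it as expressing the translate $\Phi_{\xi+\epsilon e_j}$ in terms of values of $\Phi$ at nearby points. The key structural observation is that the coefficient $V_j(\xi;g)$ carries poles precisely along the singular hyperplanes, so the difference equation propagates (or fails to propagate) singularities across integer translations in the $\xi_j$ directions.

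The central mechanism I would exploit is that an integer translation $\xi\mapsto\xi+e_{\epsilon J}$ shifts the singular hyperplane families among themselves, while $\Phi_\xi$ is $W$-invariant in $\xi$. I would argue as follows: suppose $\Phi_\xi$ is already known to be regular on some domain, and consider a candidate singular hyperplane $H$ (say $2\xi_j=m$ for some $m\in\mathbb{Z}$, or $\xi_j\pm\xi_k=m$). Using the difference equation to relate the value of $\Phi$ near $H$ to its values at translated points $\xi+e_{\epsilon J}$, which by the inductive setup lie in the already-established regularity region, one obtains that the residue of $\Phi_\xi$ along $H$ must vanish. The $W$-invariance is essential here: it reduces the infinitely many hyperplanes to finitely many $W$-orbits, and it allows one to trade a translation that would leave the regularity region for a reflected one that stays inside it. I would organize the induction by the ``depth'' of the hyperplane, for instance by the value $|m|$ in $2\xi_j=m$ or in $\xi_j\pm\xi_k=m$, starting from the hyperplanes adjacent to $\mathbb{C}^n_{\text{reg}}$ (at most simple poles there by Proposition \ref{regular-domain:prp}) and pushing outward.

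The technical heart of the argument is a careful bookkeeping of poles and zeros of the coefficient functions $V_{\epsilon J}(\xi;g)$ and $U_{K,p}(\xi;g)$ in \eqref{V}--\eqref{U}. One must verify that when the difference equation \eqref{CHDE} is evaluated (or its principal part extracted) near a prospective singular hyperplane, the only term that could produce a pole of $\Phi_\xi$ is cancelled either by a vanishing coefficient or by a compensating pole from another summand, forcing the putative singularity to be removable. Because the summation over $J$ and over the signs $\epsilon_j$ is large, the delicate point is to isolate, for each candidate hyperplane $H$, exactly which subsets $J$ and sign choices contribute a singular term and to check that the corresponding residues cancel in pairs (typically a term and its image under a reflection in $W$). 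This combinatorial cancellation is what I expect to be the main obstacle; making it airtight requires tracking the factors $2\xi_j(2\xi_j+\epsilon_j)$, $(\xi_j^2-\xi_k^2)$ and $(1+\epsilon_j\xi_j+\epsilon_{j'}\xi_{j'})$ through the translation $T_{\epsilon J,\xi}$ and matching their zeros and poles against the hyperplane defining $H$.

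Once regularity along every hyperplane of $\mathbb{C}^n\setminus\mathbb{C}^n_{\text{reg}}$ has been established, I would invoke Hartogs' theorem exactly as the statement suggests: a function holomorphic on the complement of a complex-codimension-one (in fact, locally an analytic hypersurface) set, with the removable-singularity property verified generically along each component, extends across the remaining lower-dimensional intersection loci where several hyperplanes meet. Since the already-known holomorphy region $\mathbb{C}^n_{\text{reg}}$ together with the hyperplane-by-hyperplane removability covers all of $\mathbb{C}^n$ outside a set of complex codimension at least two (the pairwise intersections of the singular hyperplanes), Hartogs' extension theorem yields a single entire function of $\xi$. Joint holomorphy in $(\xi,x,g)$ then follows by combining this with the joint holomorphy already recorded in Propositions \ref{FW-function:prp} and \ref{regular-domain:prp} on the respective domains, completing the proof that $\Phi_\xi(x;g)$ is entire on $\mathbb{C}^n\times\mathbb{C}^n\times\mathbb{C}$.
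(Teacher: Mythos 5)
Your plan coincides with the paper's own proof: both establish Theorem \ref{ac-wf:thm} by inducting on the integer $m$ labelling the singular hyperplanes, extracting residues of the dual difference equations, using the $W$-invariance of $\Phi_\xi$ to supply the needed cancellations, and finishing with Hartogs' theorem. The only substantive detail your sketch leaves open is the actual bookkeeping: the paper reduces by $W$-symmetry to the two families $2\xi_1=m$ and $\xi_1+\xi_2=m$, handles the first with the $\ell=1$ equation and the second with the $\ell=2$ equation (the single-step shifts of the $\ell=1$ equation do not reach across the hyperplanes $\xi_1+\xi_2=m$), and splits each induction step into the cases $m=1,2,3$ and $m\geq 4$ according to whether the relevant coefficient $V$ contributes a double pole, a simple pole paired with a $W$-symmetry identity, or no pole at all.
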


This analyticity of the hyperoctahedral Whittaker function  implies in turn that the difference equations themselves extend as holomorphic identities on $\mathbb{C}^n\times\mathbb{C}^n\times\mathbb{C}$.

\begin{corollary}[Analyticity of the Dual Difference Equations]\label{ac-deq:cor} 
The difference equations of Theorem \ref{HWD:thm} extend (uniquely) to identities between entire functions of the variables $(\xi,x,g)\in\mathbb{C}^n\times\mathbb{C}^n\times\mathbb{C}$.
\end{corollary}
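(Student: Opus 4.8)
The plan is to upgrade the identity \eqref{CHDE}, which Theorem \ref{HWD:thm} establishes only on the regular locus $\mathbb{C}_{\text{reg}}^n\times\mathbb{C}^n\times\mathbb{C}$, to an identity between globally defined entire functions, by feeding in the analyticity of the wave function supplied by Theorem \ref{ac-wf:thm}. First I would observe that the right-hand side $e^{x_1+\cdots+x_\ell}\Phi_\xi(x;g)$ is, by Theorem \ref{ac-wf:thm}, an entire function of $(\xi,x,g)$ on all of $\mathbb{C}^n\times\mathbb{C}^n\times\mathbb{C}$, and that for every admissible pair $(J,\epsilon)$ occurring in the sum the shifted wave function $\Phi_{\xi+e_{\epsilon J}}(x;g)$ is likewise entire, being the composition of the entire function $\Phi$ with the affine translation $\xi\mapsto\xi+e_{\epsilon J}$ in the spectral variable.

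Next I would examine the coefficients. The functions $U_{J^c,\,\ell-|J|}(\xi;g)$ and $V_{\epsilon J}(\xi;g)$ defined in \eqref{U}, \eqref{V} are rational in $\xi$ (and polynomial in $g$), so the left-hand side of \eqref{CHDE} is a finite sum of products of a rational function of $\xi$ with an entire function, hence meromorphic on $\mathbb{C}^n\times\mathbb{C}^n\times\mathbb{C}$. The key bookkeeping step is to locate the polar divisor: inspecting the denominators $2\xi_j(2\xi_j+\epsilon_j)$, $\xi_j^2-\xi_k^2$, $(\epsilon_j\xi_j+\epsilon_{j^\prime}\xi_{j^\prime})$ and $(1+\epsilon_j\xi_j+\epsilon_{j^\prime}\xi_{j^\prime})$ appearing in \eqref{U} and \eqref{V}, one checks that each potential pole hyperplane forces either $2\xi_j\in\mathbb{Z}$ (the zeros of $2\xi_j$ and of $2\xi_j+\epsilon_j$) or $\xi_j\pm\xi_k\in\mathbb{Z}$ (the zeros of the remaining factors), so that all singularities of the coefficients are confined to the hyperplane arrangement $\mathbb{C}^n\setminus\mathbb{C}_{\text{reg}}^n$.

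The argument then closes by uniqueness of analytic continuation. Both sides of \eqref{CHDE} are meromorphic on $\mathbb{C}^n\times\mathbb{C}^n\times\mathbb{C}$ and agree, by Theorem \ref{HWD:thm}, on the dense open set $\mathbb{C}_{\text{reg}}^n\times\mathbb{C}^n\times\mathbb{C}$; hence they coincide as meromorphic functions throughout. Since the right-hand side is holomorphic everywhere, the meromorphic left-hand side can have no actual poles---the apparent singularities of the individual summands along $\mathbb{C}^n\setminus\mathbb{C}_{\text{reg}}^n$ are forced to cancel in the sum---so \eqref{CHDE} persists as an identity of entire functions on all of $\mathbb{C}^n\times\mathbb{C}^n\times\mathbb{C}$. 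The uniqueness of this extension is immediate, since a continuous (indeed holomorphic) function is determined by its values on a dense set.

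I do not expect a genuine obstacle here: the analytic heavy lifting has already been carried out in Theorem \ref{ac-wf:thm}, and the cancellation of poles need not be verified by hand---it is enforced automatically by the holomorphy of the right-hand side via the identity principle. The only point requiring care is the routine verification that the denominators in \eqref{U} and \eqref{V} vanish solely on hyperplanes lying in $\mathbb{C}^n\setminus\mathbb{C}_{\text{reg}}^n$, which reduces to the elementary observations on integrality recorded above.
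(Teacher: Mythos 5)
Your proposal is correct and follows essentially the same route as the paper, which likewise treats the corollary as an immediate consequence of Theorem \ref{ac-wf:thm}: the shifted wave functions are entire, the coefficients are rational with poles confined to $\mathbb{C}^n\setminus\mathbb{C}^n_{\mathrm{reg}}$, and the identity principle forces the apparent singularities of the left-hand side to cancel since it agrees with the entire right-hand side on a dense open set. Nothing further is needed.
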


\subsection{Residue calculus}
While for any $\ell\in \{ 1,\ldots ,n\}$ the analyticity of both sides of the difference equation in
Corollary \ref{ac-deq:cor} is an immediate consequence of the asserted analyticity of  $\Phi_\xi (x;g)$ \eqref{connection-formula}, \eqref{c-function}  in Theorem \ref{ac-wf:thm}, the verification of the latter regularity requires
a detailed study of the residues of the hyperoctahedral Whittaker function in $\xi$ along the hyperplanes  of $\mathbb{C}^n\setminus\mathbb{C}^n_{\emph{reg}}$. In view of the hyperoctahedral symmetry, it is sufficient to infer that
the residues of $\Phi_\xi (x;g)$  along the hyperplanes $2\xi_1\in\mathbb{Z}_{>0}$ and $\xi_1+\xi_2\in\mathbb{Z}_{>0}$ vanish.

Specifically, for a fixed integer  $m$ and a meromorphic function $f:\mathbb{C}^n\to \mathbb{C}$ with at most simple poles along the hyperplanes
\begin{equation*}
H_{1;m}:=\{ \xi\in\mathbb{C}^n\mid  2\xi_1= m \} \quad\text{and}\quad  H_{12;m}:=\{ \xi\in\mathbb{C}^n\mid  \xi_1+\xi_2 = m \} ,
\end{equation*}
the residues of $f$ along these hyperplanes are given by the meromorphic functions
\begin{align*}
\text{Res}_{1;m} f (\xi) &:= \lim_{2\xi_1\to m}   (2\xi_1-m) f(\xi) , \\
\text{Res}_{12;m} f (\xi) &:= \lim_{\xi_1+\xi_2\to m}   (\xi_1+\xi_2-m) f(\xi)
\end{align*}
on $H_{1;m}$ and $H_{12;m}$, respectively.
In other words, $\text{Res}_{1;m}f(\xi)$ and $\text{Res}_{12;m}f(\xi)$ are the restrictions
$[(2\xi_1 -m) f(\xi)]_{2\xi_1=m}$ and $[(\xi_1+\xi_2 -m) f(\xi)]_{\xi_1+\xi_2=m}$
of the meromorphic functions $(2\xi_1 -m) f(\xi)$ to $H_{1;m}$   and  $(\xi_1+\xi_2 -m) f(\xi)$ to $H_{12;m}$, respectively.
We will now check that for $m\in\mathbb{Z}_{>0}$ both $\text{Res}_{1;m}\Phi_\xi(x;g)$ and $\text{Res}_{12;m}\Phi_\xi(x;g)$ vanish by induction in $m$, assuming that $\text{Res}_{1;k}\Phi_\xi(x;g)=0$ and $\text{Res}_{12;k}\Phi_\xi(x;g)=0$ for $0\leq k <m$.  Recall that for $m=1$ this induction hypothesis is fulfilled trivially by virtue of the hyperoctahedral symmetry.

\subsubsection{Verification that $\emph{Res}_{1;m}\Phi_\xi(x;g)=0$ for $m\in\mathbb{Z}_{>0}$}

We start from the simplest difference equation in Theorem \ref{HWD:thm} corresponding to $\ell=1$ (cf. Eqs. \eqref{HWD1:a}, \eqref{HWD1:b}):
\begin{align}\label{diffeqn1}
V_1(\xi)(\Phi_{\xi+e_1} -\Phi_\xi) &+V_{-1}(\xi)(\Phi_{\xi-e_1} -\Phi_\xi) \\
& +\sum_{2 \le j \le n,\epsilon\in\{1,-1\}}V_{\epsilon j}(\xi)(\Phi_{\xi+\epsilon e_j} -\Phi_\xi) 
 = e^{x_1}\Phi_\xi \nonumber
\end{align}
(where  the dependence on $x$ and $g$ is suppressed and we have written $V_{\epsilon j}(\xi)$ for $V_j(\epsilon \xi)$).

If \underline{$m=1$}, then multiplication of Eq.  \eqref{diffeqn1} by $(2\xi_1+1)^2$ and performing the limit 
$2\xi_1 +1\to 0$ yields the identity
$$
\text{Res}_{1;-1}V_1(\xi) \,   \text{Res}_{1;-1}   \Phi_{\xi+e_1} =0 .
$$
Since
$$
\text{Res}_{1;-1}V_1(\xi) =   -g
\prod_{1< k\leq n}  ({\textstyle \frac{1}{4}}-\xi_{k}^2)^{-1}       \not\equiv 0 ,
$$
it follows that $ \text{Res}_{1;-1}   \Phi_{\xi+e_1}(x,g) \equiv 0$ for $(\xi,x,g)\in H_{1;-1}\times\mathbb{C}^n\times\mathbb{C}$, and thus
$\text{Res}_{1;1}   \Phi_{\xi} (x;g) \equiv 0$  on $H_{1;1}\times\mathbb{C}^n\times\mathbb{C}$.

If \underline{$m=2$}, then multiplication of Eq. \eqref{diffeqn1} by $4\xi_1^2$ and performing the limit 
$2\xi_1\to 0$ yields that
\begin{equation*}
\text{Res}_{1;0}V_1(\xi) \text{Res}_{1;0} \Phi_{\xi+e_1} +
\text{Res}_{1;0}V_{-1}(\xi) \text{Res}_{1;0} \Phi_{\xi-e_1} = 0.
\end{equation*}
Since
$$
\text{Res}_{1;0}V_{1}(\xi) =-\text{Res}_{1; 0}V_{-1} (\xi) =   
(-1)^{n-1} ({\textstyle \frac{1}{2}} +g)
\prod_{1< k\leq n}  \xi_{k}^{-2}  
      \not\equiv 0,
$$
and 
$$
\text{Res}_{1;0}   \Phi_{\xi+e_1}=-\text{Res}_{1;0}   \Phi_{\xi-e_1}
 $$
by the hyperoctahedral symmetry, it follows that $ \text{Res}_{1;0}   \Phi_{\xi+e_1}(x,g) \equiv 0$ for $(\xi,x,g)\in H_{1;0}\times\mathbb{C}^n\times\mathbb{C}$, and thus
$\text{Res}_{1;2}   \Phi_{\xi} (x;g) \equiv 0$  on $H_{1;2}\times\mathbb{C}^n\times\mathbb{C}$.

If \underline{$m=3$}, then multiplication of Eq.  \eqref{diffeqn1} by $2\xi_1-1$ and performing the limit 
$2\xi_1-1\to 0$ yields that
\begin{equation*}
\bigl[ V_1(\xi) \bigr] _{2\xi_1=1} \text{Res}_{1;1}\Phi_{\xi+e_1} +
\text{Res}_{1;1} V_{-1}(\xi) \bigl[ \Phi_{\xi-e_1}-\Phi_\xi\bigr]_{2\xi_1=1} = 0  .
\end{equation*}
Since
$
\bigl[ V_1(\xi) \bigr] _{2\xi_1=1}  \not\equiv 0 ,
$
and
\begin{equation*}
\bigl[ \Phi_{\xi-e_1}\bigr]_{2\xi_1=1}=\Phi_{(-\frac12,\xi_2,\dots,\xi_n)}
=\Phi_{(\frac12,\xi_2,\dots,\xi_n)}
=\bigl[\Phi_{\xi}\bigr]_{2\xi_1=1}
\end{equation*}
by the hyperoctahedral symmetry, it follows that $ \text{Res}_{1;1}   \Phi_{\xi+e_1}(x,g) \equiv 0$ for $(\xi,x,g)\in H_{1;1}\times\mathbb{C}^n\times\mathbb{C}$, and thus
$\text{Res}_{1;3}   \Phi_{\xi} (x;g) \equiv 0$  on $H_{1;3}\times\mathbb{C}^n\times\mathbb{C}$.

If \underline{$m\ge 4$}, then multiplication of Eq.  \eqref{diffeqn1} by $2\xi_1-m+2$ and performing the limit 
$2\xi_1-m+2\to 0$ yields that
\begin{equation*}
\bigl[ V_1(\xi)\bigr]_{ 2\xi_1= m-2} \text{Res}_{1;m-2} \Phi_{\xi+e_1} = 0 .
\end{equation*}
Since
$
\bigl[ V_1(\xi) \bigr] _{2\xi_1=m-2}   \not\equiv 0 ,
$
it follows that $ \text{Res}_{1;m-2}   \Phi_{\xi+e_1}(x,g) \equiv 0$ for $(\xi,x,g)\in H_{1;m-2}\times\mathbb{C}^n\times\mathbb{C}$, and thus
$\text{Res}_{1;m}   \Phi_{\xi} (x;g) \equiv 0$  on $H_{1;m}\times\mathbb{C}^n\times\mathbb{C}$.

\subsubsection{Verification that $\emph{Res}_{12;m}\Phi_\xi(x;g)=0$ for $m\in\mathbb{Z}_{>0}$}
The required computations are similar to the previous case, but based instead on the
second difference equation of Theorem \ref{HWD:thm}  corresponding to $\ell=2$:

\begin{align}\label{diffeqn2}
\sum_{\substack{1\leq j < j^\prime\leq n \\ \epsilon,\epsilon^\prime \in \{ 1, -1\} }} 
 &V_{\{   \epsilon j,\epsilon^\prime  j^\prime \} } (\xi)  \left (\Phi_{\xi+\epsilon e_j+\epsilon^\prime e_{j^\prime}} -\Phi_\xi \right)
  + \\
  \sum_{1 \le j \le n,\epsilon\in\{1,-1\}}& U_{\{ 1,\ldots ,n\}\setminus \{ j\} , 1}(\xi) V_{\epsilon j}(\xi) \Phi_{\xi+\epsilon e_j} 
 = e^{x_1+x_2}\Phi_\xi , \nonumber
\end{align}
 where
 \begin{align*}
 V_{\{   \epsilon j,\epsilon^\prime  j^\prime \} } (\xi) =&
\frac{(\frac{1}{2}+g+\epsilon \xi_j)}{2\xi_j(2\xi_j+\epsilon)}
\frac{(\frac{1}{2}+g+\epsilon^\prime \xi_{j^\prime})}{2\xi_{j^\prime}(2\xi_{j^\prime}+\epsilon^\prime)}
\prod_{\substack{1\leq k\leq n \\k\neq j,j^\prime}} 
  (\xi_j^2-\xi_{k}^2)^{-1}  (\xi_{j^\prime}^2-\xi_{k}^2)^{-1}
\nonumber \\
&  \times
(\epsilon \xi_j+\epsilon^\prime\xi_{j^\prime})^{-1}
(1+\epsilon \xi_j+\epsilon^\prime\xi_{j^\prime})^{-1} ,
 \end{align*}
\begin{equation*}
U_{\{ 1,\ldots ,n\}\setminus \{ j\} , 1}(\xi) = - \sum_{\substack{1\leq i\leq n,\, i \neq j\\ \epsilon\in\{ 1,-1\} }}
\frac{\frac{1}{2}+g+\epsilon \xi_i}{2\xi_i(2\xi_i+\epsilon )}
 \prod_{\substack{ 1\leq k\leq n \\ k\neq j , i} }
(\xi_i^2-\xi_{k}^2)^{-1} ,
\end{equation*}
and $V_{\epsilon j}(\xi) $ is in accordance with Eq. \eqref{diffeqn1}.

 \underline{If $m=1$}, then multiplication of  Eq. \eqref{diffeqn2} by $(\xi_1+\xi_2+1)^2$ and performing the limit 
$\xi_1+\xi_2+1\to 0$ yields that
\begin{equation*}
\text{Res}_{12;-1}V_{\{ +1,+2\} }(\xi) \Bigl( \text{Res}_{12;-1}\Phi_{\xi+e_1+e_2}
-\text{Res}_{12;-1}\Phi_{\xi}  \Bigr) =0 .
\end{equation*}
Since
\begin{equation*}
\text{Res}_{12;-1} V_{\{ +1,+2\} }(\xi)
= -\Bigl[ \prod_{j\leq 2} \frac{\frac12+g+\xi_j}{2\xi_j (2\xi_j+1)} \prod_{\substack{ j\leq 2 \\ k\geq 3}} (\xi_j^2 -\xi_k^2)\Bigr]_{\xi_1+\xi_2=1}
\not\equiv 0 ,
\end{equation*}
and 
\begin{equation*}
\text{Res}_{12;-1}\Phi_{\xi+e_1+e_2} 
=-\text{Res}_{12;-1} \Phi_{\xi}
\end{equation*}
by the hyperoctahedral symmetry, it follows that $\text{Res}_{12;-1} \Phi_{\xi+e_1+e_2}(x;g)\equiv 0$ for $(\xi,x,g)\in H_{12;-1}\times\mathbb{C}^n\times \mathbb{C}$, and thus
$\text{Res}_{12;1} \Phi_{\xi}(x;g)\equiv 0$ on $H_{12;1}\times\mathbb{C}^n\times \mathbb{C}$.

If  \underline{$m=2$}, then multiplication of Eq. \eqref{diffeqn2} by  $(\xi_1+\xi_2)^2$ and performing the limit 
$\xi_1+\xi_2\to 0$ yields that
\begin{equation*}
\text{Res}_{12;0} V_{\{ +1,+2\} }(\xi) \text{Res}_{12;0 }\Phi_{\xi+e_1+e_2}
+ \text{Res}_{12;0} V_{\{ -1,-2\} }(\xi) \text{Res}_{12;0}\Phi_{\xi - e_1-e_2}
=0 .
\end{equation*}
Since 
\begin{align*}
\text{Res}_{12;0} V_{\{ +1,+2\} }(\xi)
&=- \text{Res}_{12;0} V_{\{ -1,-2\} }(\xi) \\
&=  \Bigl[ \prod_{j\leq 2} \frac{\frac12+g+\xi_j}{2\xi_j (2\xi_j+1)} \prod_{\substack{ j\leq 2 \\ k\geq 3}} (\xi_j^2 -\xi_k^2)\Bigr]_{\xi_1+\xi_2=0}
\not\equiv 0 ,
\end{align*}
and 
\begin{equation*}
\text{Res}_{12;0}\Phi_{\xi+e_1+e_2} 
=-\text{Res}_{12;0} \Phi_{\xi-e_1-e_2}
\end{equation*}
by the hyperoctahedral symmetry, it follows that
$\text{Res}_{12;0} \Phi_{\xi+e_1+e_2}(x;g)\equiv 0$ for $(\xi,x,g)\in H_{12;0}\times\mathbb{C}^n\times \mathbb{C}$, and thus
$\text{Res}_{12;2} \Phi_{\xi}(x;g)\equiv 0$ on $H_{12;2}\times\mathbb{C}^n\times \mathbb{C}$.

If  \underline{$m=3$}, then multiplication of  Eq. \eqref{diffeqn2} by $\xi_1+\xi_2-1$ and performing the limit 
$\xi_1+\xi_2\to 1$ yields that
\begin{align*}
 \bigl[ V_{\{ +1,+2\} }(\xi)  \bigr]_{\xi_1+\xi_2=1}& \text{Res}_{12;1} \Phi_{\xi+e_1+e_2} \\
&+\text{Res}_{12;1} V_{\{ -1,-2\}}(\xi) \bigl[ \Phi_{\xi-e_1 -e_2}-\Phi_\xi \bigr]_{\xi_1+\xi_2=1} 
=0 .
\end{align*}
Since $ \bigl[ V_{\{ +1,+2\} }(\xi)  \bigr]_{\xi_1+\xi_2=1}\not\equiv 0$,
and 
\begin{equation*}
\bigl[ \Phi_{\xi-e_1 -e_2}\bigr]_{\xi_1+\xi_2=1}
= \bigl[ \Phi_{\xi} \bigr]_{\xi_1+\xi_2=1} 
\end{equation*}
by the hyperoctahedral symmetry, it follows that
$\text{Res}_{12;1} \Phi_{\xi+e_1+e_2}(x;g)\equiv 0$ for $(\xi,x,g)\in H_{12;1}\times\mathbb{C}^n\times \mathbb{C}$, and thus
$\text{Res}_{12;3} \Phi_{\xi}(x;g)\equiv 0$ on $H_{12;3}\times\mathbb{C}^n\times \mathbb{C}$.

If  \underline{$m\ge 4$}, then multiplication of  Eq. \eqref{diffeqn2} by  $\xi_1+\xi_2-m+2$ and performing the limit 
$\xi_1+\xi_2\to m-2$ yields that
\begin{equation*}
\bigl[ V_{\{ +1,+2\}}(\xi)\bigr]_{\xi_1+\xi_2=m-2}  \text{Res}_{12;m-2}\Phi_{\xi+e_1+e_2}= 0.
\end{equation*}
Since $\bigl[ V_{\{ +1,+2\}}(\xi)\bigr]_{\xi_1+\xi_2=m-2}  \not\equiv 0$, it follows that
$\text{Res}_{12;m-2} \Phi_{\xi+e_1+e_2}(x;g)\equiv 0$ for $(\xi,x,g)\in H_{12;m-2}\times\mathbb{C}^n\times \mathbb{C}$, and thus
$\text{Res}_{12;m} \Phi_{\xi}(x;g)\equiv 0$ on $H_{12;m}\times\mathbb{C}^n\times \mathbb{C}$.

\subsubsection{Conclusion}
The above residue computations confirm that all singularites of $\Phi_\xi (x;g)$  \eqref{connection-formula}, \eqref{c-function} in $(\xi,x,g)$ along the hyperplanes of $\mathbb{C}^n\setminus\mathbb{C}^n_{\emph{reg}}\times\mathbb{C}^n\times\mathbb{C}$ are removable, which completes the proof of Theorem \ref{ac-wf:thm}.

\section{Proof of Proposition \ref{FW-function:prp}}\label{FW-function:prf}
The statements of the proposition are verified via
a Harish-Chandra type analysis---cf. e.g.
\cite[Ch. IV.5]{hel:groups}  and  \cite[\S 4]{has:whittaker}---which is tailored towards the current  setup of the Toda Laplacian with Morse perturbations.

\subsection{Growth estimates for $\boldsymbol{a_\nu (\xi ;g)}$}
We will first infer that---given a (any) compact subset $K\subset\mathbb{U}^n\times \mathbb{C}^n\times\mathbb{C}$---there exists a constant $C>0$ (depending only on $K$) such that
$\forall (\xi,x,g)\in K$ and $\forall \nu \geq 0$:
\begin{equation}\label{growth}
    | a_\nu (\xi ;g) e^{-\langle x,\nu\rangle } | \leq \frac{C^{\langle \nu ,\rho\rangle}}{\langle \nu , \rho\rangle !}\quad\text{where}\quad  \rho:=(n,n-1,\ldots,2,1).
\end{equation}
This bound hinges on the following (well-known) elementary estimates.
\begin{lemma}\label{estimates:lem}
There exist constants $\emph{a},\emph{b},\emph{c}>0$ such that $\forall (\xi,x,g)\in K$:
\begin{itemize}
\item[(i)]     $ | \langle \nu-2\xi ,\nu\rangle |\geq \emph{a} \langle \nu ,\rho\rangle^2 $, $\forall \nu \geq 0$; \vspace{1ex}
\item[(ii)]    $|e^{-\langle \nu ,x\rangle}|\leq \emph{b}^{\langle\nu,\rho\rangle}$, $\forall \nu \geq 0$;  \vspace{1ex}
\item[(iii)]   $ | \emph{a}_\alpha | \leq \emph{c}$, $\forall \alpha \in S$.
\end{itemize}
\end{lemma}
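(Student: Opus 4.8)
The plan is to treat the three estimates in order of increasing difficulty, reserving the bulk of the work for (i). Part (iii) is immediate from the explicit values in \eqref{A}: every coefficient $\text{a}_\alpha$ equals $2$ or $\frac14$ except $\text{a}_{e_n}=g$, so since the projection of the compact set $K$ to the $g$-coordinate is bounded, one simply sets $\text{c}:=\max\{2,\tfrac14,\sup_{(\xi,x,g)\in K}|g|\}$. For part (ii) the key device is summation by parts. Writing $s_k:=\nu_1+\cdots+\nu_k$ (so that $s_k\geq 0$ for $\nu\geq 0$, with $s_0=0$), Abel summation gives, with the convention $x_{n+1}\equiv 0$,
\[
\langle\nu,x\rangle=\sum_{1\leq k\leq n}s_k(x_k-x_{k+1}),\qquad
\langle\nu,\rho\rangle=\sum_{1\leq k\leq n}s_k ,
\]
where the second identity is the special case $x=\rho$ (using $\rho_k-\rho_{k+1}=1$ and $\rho_n=1$). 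Taking real parts and using $s_k\geq 0$,
\[
-\mathrm{Re}\,\langle\nu,x\rangle\leq\sum_{1\leq k\leq n}s_k\,|x_k-x_{k+1}|\leq \mathrm{M}\,\langle\nu,\rho\rangle ,
\]
with $\mathrm{M}:=\max_{1\leq k\leq n}\sup_K|x_k-x_{k+1}|$ finite by compactness; hence $|e^{-\langle\nu,x\rangle}|=e^{-\mathrm{Re}\langle\nu,x\rangle}\leq(e^{\mathrm{M}})^{\langle\nu,\rho\rangle}$ and we set $\text{b}:=e^{\mathrm{M}}$.

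For part (i) I would first isolate a purely combinatorial comparison, valid for every $\nu\geq 0$:
\[
\langle\nu,\nu\rangle\geq n^{-3}\langle\nu,\rho\rangle^2 .
\]
This follows from the identity above and Cauchy--Schwarz, since $0\leq s_k=\sum_{j\leq k}\nu_j\leq\sqrt{k}\,\langle\nu,\nu\rangle^{1/2}\leq\sqrt{n}\,\langle\nu,\nu\rangle^{1/2}$, whence $\langle\nu,\rho\rangle=\sum_k s_k\leq n^{3/2}\langle\nu,\nu\rangle^{1/2}$; here one uses that $\langle\nu,\nu\rangle=\sum_j\nu_j^2$ is real and nonnegative.

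Next I would split the estimate for $\langle\nu-2\xi,\nu\rangle=\langle\nu,\nu\rangle-2\langle\xi,\nu\rangle$ into two regimes. The quadratic term is real and nonnegative, while $|2\langle\xi,\nu\rangle|\leq 2\mathrm{R}\,\langle\nu,\nu\rangle^{1/2}$ with $\mathrm{R}:=\sup_K(\sum_j|\xi_j|^2)^{1/2}$ (Cauchy--Schwarz). Thus for $\langle\nu,\nu\rangle\geq 16\mathrm{R}^2$ the linear term is absorbed, giving $|\langle\nu-2\xi,\nu\rangle|\geq\tfrac12\langle\nu,\nu\rangle\geq\tfrac12 n^{-3}\langle\nu,\rho\rangle^2$. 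The complementary regime $\langle\nu,\nu\rangle<16\mathrm{R}^2$ involves only finitely many lattice points $\nu$, for which $\langle\nu,\rho\rangle$ is bounded; here it suffices to bound $|\langle\nu-2\xi,\nu\rangle|$ below by a positive constant uniformly over $K$ for each such $\nu>0$. This is precisely where the hypothesis $K\subset\mathbb{U}^n\times\mathbb{C}^n\times\mathbb{C}$ enters: by the very definition of $\mathbb{U}^n$, the continuous function $\xi\mapsto\langle\nu-2\xi,\nu\rangle$ is nowhere zero on the compact $\xi$-projection of $K$ for $\nu>0$, hence bounded away from $0$; taking the minimum of these bounds over the finite index set and dividing by the bounded quantities $\langle\nu,\rho\rangle^2$ yields a positive constant, and the final $\text{a}$ is the smaller of this and $\tfrac12 n^{-3}$ (for $\nu=0$ both sides vanish). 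I expect the main obstacle to be not any individual inequality but securing this \emph{uniformity} over $K$ in the small-$\nu$ regime, which rests on marrying the compactness of $K$ with the defining non-vanishing condition of $\mathbb{U}^n$.
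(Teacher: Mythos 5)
Your argument is correct and follows essentially the same route as the paper's: the crux of part (i) is the comparison $\langle\nu,\rho\rangle^2\leq c\,\langle\nu,\nu\rangle$ on the cone $\nu\geq 0$ followed by a split into a large-$\nu$ regime (where the quadratic term absorbs $2\langle\xi,\nu\rangle$) and a finite small-$\nu$ regime handled by the compactness of $K$ together with the defining non-vanishing condition of $\mathbb{U}^n$. The only (harmless) deviation is in part (ii), where you use Abel summation and the nonnegativity of the partial sums $s_k$ in place of the paper's Cauchy--Schwarz estimate combined with the reverse comparison $c_1\langle\nu,\nu\rangle\leq\langle\nu,\rho\rangle^2$.
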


\begin{proof}
Since the integral cone $\{ \nu\in\mathbb{Z}^n\mid \nu\geq 0\}$ is nonnegatively generated by $S$ \eqref{S}, the angle between  $\rho$   and the vectors in this cone stays strictly sharp and bounded away from $\frac{\pi}{2}$, i.e. there exist positive constants $c_1,c_2$ such that
\begin{equation}\label{ip-est}
 \forall \nu\geq 0:\quad c_1\langle \nu ,\nu\rangle \leq    \langle \nu,\rho\rangle^2 \leq  c_2 \langle \nu ,\nu\rangle  .
\end{equation}
Let $k:= \max_{(\xi,x,g)\in K}  \langle 2\xi ,2\xi\rangle$ and pick $c_0>1$. Then for $\langle\nu ,\nu\rangle >  kc_0^2 $ the estimate in part (i) holds because of Eq. \eqref{ip-est}
and the observation that $| \langle \nu-2\xi,\nu\rangle |\geq (1-\frac{1}{c_0})\langle\nu ,\nu\rangle$, whereas for $\langle\nu ,\nu\rangle \leq  kc_0^2 $ the estimate in question follows from Eq. \eqref{ip-est} and the fact that  for $\nu >0$  the quantity $| \langle \nu-2\xi,\nu\rangle |$ stays bounded from below on $K$ away from zero.
The estimate in part (ii) follows from the Cauchy-Schwarz inequality, the compactness of $K$, and Eq. \eqref{ip-est}, while the estimate in part (iii) is plain from the compactness of $K$.
\end{proof}

To see how the asserted bound in Eq. \eqref{growth} arises from the estimates in Lemma \ref{estimates:lem}, we
pick a point $ (\xi,x,g)\in K$ and define
\begin{subequations}
\begin{equation}\label{A:def}
A_m(\xi ;g) := \max_{\substack{\nu\geq 0\\ \langle \nu ,\rho\rangle=m}} | a_\nu (\xi ; g) |
\end{equation}
for $m\in\mathbb{Z}_{\geq 0}$. The estimates in parts (i) and (iii) of Lemma \ref{estimates:lem} then imply that
\begin{equation}\label{A:bound}
A_m(\xi ;g) \leq \frac{A^m}{m!}\quad\text{with}\quad A:=1+\frac{\text{c}n}{\text{a}} .
\end{equation}
\end{subequations}
Indeed---while for $m=0$ this bound holds trivially---it is readily seen inductively in $m$ via
the recurrence relations \eqref{HCs:b}, \eqref{HCs:c}  that for $m>0$:
\begin{align*}
A_m(\xi;g) &\leq  \frac{\text{c}}{\text{a}\, m^2 } \Bigl(   n A_{m-1}(\xi;g) +A_{m-2} (\xi;g) \Bigl) \\
 &\leq  \frac{\text{c}}{\text{a}\, m^2 } \Bigl(    \frac{ nA^{m-1}}{(m-1)!}+\frac{A^{m-2}}{(m-2)!} \Bigl) \\
 &\leq \frac{A^m}{m!}      \frac{\text{c}}{\text{a} }  \Bigl( \frac{n}{A}+\frac{1}{A^2}    \Bigr) \leq \frac{A^m}{m!}  
\end{align*}
(with the implicit understanding that  the  terms involving  $A_{m-2}(\xi;g)$ and  $\frac{A^{m-2}}{(m-2)!}$ in the intermediate expressions are absent when $m=1$).
Notice at this point  that the last  of these successive estimates is immediate from the manifest inequality $A^2> \frac{\text{c}}{\text{a}} (1+nA)$.  By combining the bound in Eq. \eqref{A:bound} and the estimate in part (ii) of Lemma \ref{estimates:lem},   
the desired  upperbound  on the growth of the expansion coefficients  in Eq. \eqref{growth} follows
with $C=\text{b}A$
(which is independent of the choice  of $ (\xi,x,g)\in K$ and thus holds uniformly).

\subsection{Convergence of the Harish-Chandra series}
From the estimate in Eq. \eqref{growth} it is clear that the Harish-Chandra series $\phi_\xi(x;g)$ \eqref{HCs:a}--\eqref{HCs:c}
converges uniformly on the compact set  $K\subset\mathbb{U}^n\times \mathbb{C}^n\times\mathbb{C}$ in absolute value:
\begin{align*}
| e^{-\langle\xi ,x\rangle} \phi_\xi (x;g) | &\leq \sum_{\nu\geq 0}  | a_\nu (\xi ;g) e^{-\langle \nu ,x\rangle} | 
\leq \sum_{\nu\geq 0}  \frac{C^{\langle \nu ,\rho\rangle}}{\langle \nu , \rho\rangle !} \\
&=
\sum_{m\geq 0}  \Bigl( \sum_{\substack{ \nu\geq 0 \\ \langle \nu ,\rho\rangle =m}}    \frac{C^{\langle \nu ,\rho\rangle}}{\langle \nu , \rho\rangle !}  \Bigr) =
\sum_{m\geq 0}    \binom{n+m-1}{m}  \frac{C^m}{m !} < +\infty .
\end{align*}
Here we exploited that the number of $\nu\geq 0$ for which $\langle \nu,\rho\rangle=m$ is given explicitly  by the binomial $\binom{n+m-1}{m}$ (and thus grows at most polynomially in $m$).

\subsection{Eigenvalue equation}
The uniform convergence of the Harish-Chandra series on compacts guarantees that $\phi_\xi(x;g)$ \eqref{HCs:a}--\eqref{HCs:c}
is holomorphic for $(\xi,x,g)\in \mathbb{U}^n\times \mathbb{C}^n\times\mathbb{C}$. Substitution of the series in the eigenvalue equation \eqref{ef:eq} entails for the LHS:
\begin{equation*}
L_x \phi_\xi (x;g) = \sum_{\nu \geq 0} a_\nu (\xi;g) e^{\langle \xi -\nu,x\rangle}
\Bigl(
\langle \xi -\nu,\xi -\nu \rangle   
-\sum_{\alpha\in S} \text{a}_\alpha e^{-\langle \alpha,x\rangle}  \Bigr) ,
\end{equation*}
where we were allowed to differentiate termwise by virtue of the uniform convergence of the power series. Upon recollecting the terms and subsequently employing the recurrence relations
\eqref{HCs:b}, \eqref{HCs:c}, the  expression under consideration is rewritten as:
\begin{align*}
&\sum_{\nu \geq 0} e^{\langle \xi -\nu,x\rangle}
\Bigl(   \bigl( \langle\xi ,\xi\rangle +\langle \nu,\nu-2\xi\rangle \bigr)
  a_\nu (\xi;g) 
-\sum_{\substack{\alpha\in S\\ \nu-\alpha\geq 0}}  \text{a}_\alpha a_{\nu-\alpha} (\xi;g)  \Bigr) \\
&=
 \sum_{\nu \geq 0} \langle\xi ,\xi\rangle a_\nu (\xi;g) e^{\langle \xi -\nu,x\rangle}= \langle\xi ,\xi\rangle   \phi_\xi (x;g) ,
\end{align*}
which confirms that our fundamental Whittaker function $ \phi_\xi (x;g) $ \eqref{HCs:a}--\eqref{HCs:c} solves the asserted eigenvalue equation in Eq. \eqref{ef:eq} for the Toda Laplacian with Morse pertubations $L_x$ \eqref{L}--\eqref{A}.

\subsection{Plane-wave asymptotics}
Given a (any) wave vector $\xi\in\mathbb{U}^n$ and coupling value $g\in \mathbb{C}$, let us pick for $K\subset \mathbb{U}^n\times\mathbb{C}^n\times\mathbb{C}$ the singleton containing only the point $(\xi ,0,g)$. In this situation,  the estimate in Eq. \eqref{growth} reveals that there exists a $C>0$ (depending only on $\xi$ and $g$) such that
 $ | a_\nu (\xi;g) | \leq  \frac{C^{\langle \nu ,\rho\rangle} }{\langle \nu ,\rho\rangle !}   $  for all $\nu\geq 0$.
The plane-wave asymptotics in Eq. \eqref{ef:as} now readily follows
 by dominated convergence (assuming the additional requirement that $\text{Re}(\xi)=0$, so $| e^{\langle \xi ,x\rangle}|=1$ for all $ x\in\mathbb{R}^n$):
\begin{align*}
&\lim_{x\to +\infty} | \phi_\xi (x;g)-e^{\langle \xi ,x\rangle} |  \leq  \lim_{x\to +\infty}   \sum_{\nu > 0}  | a_\nu (\xi;g) e^{-\langle \nu,x\rangle} |
\\
& \leq   
 \lim_{x\to +\infty}   \sum_{\nu > 0}  \frac{C^{\langle \nu ,\rho\rangle}e^{-\langle \nu,x\rangle} }{\langle \nu , \rho\rangle !} 
  =     \sum_{\nu > 0} \Bigl(  \lim_{x\to +\infty} \frac{C^{\langle \nu ,\rho\rangle}e^{-\langle \nu,x\rangle} }{\langle \nu , \rho\rangle !} \Bigr) =0,
\end{align*}
where we have used that for $\nu >0$:  $e^{-\langle \nu ,x\rangle} \leq 1$ if $x_1\geq x_2\geq \cdots \geq x_n\geq 0$
and $ \lim_{x\to +\infty} e^{-\langle \nu,x\rangle}=0$.

\section{Proof of Proposition \ref{regular-domain:prp}}\label{regular-domain:prf}

To prove the proposition, we will show that our fundamental Whittaker function arises via a confluent limit---in the sense of 
Shimeno and Oshima \cite{shi:limit,osh-shi:heckman-opdam}---from the Harish-Chandra series for the hypergeometric equation with hyperoctahedral symmetry. The analytic continuation of the fundamental  Whittaker function in the spectral parameter then follows from
Opdam's corresponding result for the Harish-Chandra solution of the hypergeometric equation.

.

\subsection{Hyperoctahedral Calogero-Sutherland Laplacian}
The Hamiltonian of the hyperoctahedral  Calogero-Sutherland system is given by
\cite{ols-per:quantum,hec-sch:harmonic,opd:lecture}
\begin{align}\label{Lcs}
L_x^{\text{cs}}&= \sum_{1\leq j\leq n}
 \Bigl(  \frac{\partial^2}{\partial x_j^2}-  \frac{\frac{1}{4}k_1(k_1+2k_2-1)}{ \sinh^2 \textstyle{\frac{1}{2} } (x_j) } - \frac{k_2(k_2-1)}{ \sinh^2  (x_j)}\Bigr)  \\
&- \sum_{1\leq j<k\leq n}  \left( \frac{\frac{1}{2}k_0(k_0-1)}{ \sinh^{2} {\frac{1}{2} }   (x_j+x_k) }+ \frac{\frac{1}{2}k_0(k_0-1)}{\sinh^{2} {\frac{1}{2} }   (x_j-x_k) }\right) 
\nonumber .
\end{align}
It is well-known that Toda potentials can be interpreted as  limit degenerations of  Calogero-Sutherland potentials
\cite{sut:introduction,rui:relativistic,ino:finite,osh:completely}. Specifically,   upon setting $k_r=k_r^{(c)}$ with
\begin{subequations}
\begin{equation}\label{rescale}
k_0^{(c)}(k_0^{(c)}-1)=e^c,\quad k_1^{(c)}=2g,\quad    k^{(c)}_2(k_2^{(c)}-1)=\frac{e^{2c}}{16}\quad (\text{and}\ k_0^{(c)}, k_2^{(c)}>0),
\end{equation}
and performing a coordinate translation of the form
\begin{equation}\label{translate}
x\to x+c\rho
\end{equation}
\end{subequations}
 (with $\rho$ as in Eq. \eqref{growth}), the potential of
 the hyperoctahedral Calogero-Sutherland Laplacian $L_x^{\text{cs}}$ \eqref{Lcs} passes  in the limit $c\to +\infty$ over into
 the potential of the Toda Laplacian with Morse term $L_x$ \eqref{L}--\eqref{A}, i.e. formally  \cite{ino:finite,osh:completely}:
\begin{equation}\label{L:limit}
L_x=\lim_{c\to +\infty}  L_{x}^{\text{cs}} \left({\substack{k_r\to k_r^{(c)}\\ x\to x+c\rho}}\right)  .
\end{equation}

\subsection{Harish-Chandra series}
Upon expanding the Calogero-Sutherland potentials in the right half-plane
by means of the power series  $(e^z-e^{-z})^{-2}= \sum_{l\geq 1} l e^{-2lz}$ ($\text{Re}(z)>0$), it is readily seen that 
substitution of a formal power series of the form
\begin{subequations}
\begin{equation}\label{HC-cs:a}
\phi_\xi^{\text{cs}}(x;k_r)= \sum_{ \nu\geq 0}  a^{\text{cs}}_\nu (\xi ;k_r)  e^{\langle \xi -\nu ,x\rangle } 
\end{equation}
in the eigenvalue equation $L_x^{\text{cs}} \phi_\xi =\langle \xi, \xi\rangle \phi_\xi$, gives rise to the following recurrence relation for the expansion coefficients:
\begin{equation}\label{HC-cs:b}
 \langle \nu-2\xi ,\nu\rangle a^{\text{cs}}_\nu (\xi ;k_r) =   \sum_{\substack{\alpha\in R_+\\ l\geq 1}} l \text{a}^{\text{cs}}_\alpha (k_r) a^{\text{cs}}_{\nu-l\alpha}  (\xi ;k_r)\qquad (\nu >0) ,
\end{equation}
where $R_+:=\{ e_j , 2e_j \mid 1\leq j \leq n\} \cup \{ e_j\pm e_k \mid  1\leq j < k\leq n\}$,
\begin{equation}\label{a-alpha-cs}
\text{a}^{\text{cs}}_\alpha (k_r) :=
\begin{cases}
k_1(k_1+2k_2-1) &\text{if}\ \langle \alpha ,\alpha\rangle =1 ,\\
2k_0(k_0-1) &\text{if}\ \langle \alpha ,\alpha\rangle =2 ,\\
4k_2(k_2-1)& \text{if}\ \langle \alpha ,\alpha\rangle =4  ,
\end{cases}
\end{equation}
and we  have assumed the initial condition
\begin{equation}\label{HC-cs:d}
a^{\text{cs}}_\nu (\xi ;k_r) := \begin{cases}   1&\text{if}\ \nu =0, \\ 0&\text{if}\ \nu\not\geq 0. \end{cases} 
\end{equation}
\end{subequations}
Notice that the initial condition \eqref{HC-cs:d} guarantees in particular that the series on the RHS of the recurrence relation \eqref{HC-cs:b} amounts to a \emph{finite} sum (because  the requirement that  $\nu-l\alpha \geq 0 $  implies that $l$ remains bounded from above by $ \langle \nu ,\rho\rangle$).

The following result goes back to Opdam, cf. \cite[Cor. 2.2,\, Cor. 2.10]{opd:root}, \cite[Prp. 4.2.5]{hec-sch:harmonic}, \cite[Lem. 6.5]{opd:lecture} and \cite[Thm. 1.2]{nar-pas-pus:asymptotics}.

\begin{proposition}\label{opdam:prp} The Harish-Chandra series $\phi_\xi^{\text{cs}}(x;k_r)$ \eqref{HC-cs:a}--\eqref{HC-cs:d} constitutes an analytic function of $(\xi, x,k_r)\in \mathbb{C}^n_{+,\emph{reg}}\times\mathbb{A}^n\times\mathbb{C}^3$, where
\begin{equation}\label{chamber}
\mathbb{A}^n:= \{ x\in\mathbb{R}^n \mid x_1>x_2>\cdots >x_n>0\} .
\end{equation}
Moreover, as a (meromorphic)  function of the spectral variable $\xi\in\mathbb{C}^n$  the function under consideration has at most simple
poles along the hyperplanes belonging to
$\mathbb{C}^n\setminus \mathbb{C}^n_{+,\emph{reg}}$.
\end{proposition}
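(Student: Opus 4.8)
The plan is to establish Proposition \ref{opdam:prp} not by a direct estimate on the coefficients $a^{\text{cs}}_\nu(\xi;k_r)$, but by identifying the Harish-Chandra series $\phi_\xi^{\text{cs}}(x;k_r)$ \eqref{HC-cs:a}--\eqref{HC-cs:d} with the classical Harish-Chandra series solution of the hypergeometric system associated with the (nonreduced) root system $BC_n$ and its standard $W$-invariant multiplicity function, and then invoking Opdam's regularity theorem verbatim. First I would match conventions: the root system in play is $R = \{\pm e_j, \pm 2e_j, \pm e_j\pm e_k\}$ with simple positive subsystem $R_+$, and I would exhibit the dictionary between the three couplings $k_0,k_1,k_2$ here and the multiplicity labels $k_\alpha$ on the short, long, and middle root orbits in the Heckman-Opdam normalization. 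The key point is that the recurrence \eqref{HC-cs:b}, with the coefficients $\text{a}^{\text{cs}}_\alpha(k_r)$ in \eqref{a-alpha-cs} arising from the expansion $(e^z-e^{-z})^{-2}=\sum_{l\geq 1} l e^{-2lz}$, is \emph{precisely} the Harish-Chandra recurrence for the hypergeometric equation of type $BC_n$; once this identification is in place, the series, its domain of convergence on $\mathbb{A}^n$, and the analytic dependence on $(\xi,x,k_r)$ are exactly what Opdam's theory supplies.

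The main structural steps are then as follows. First I would write the hypergeometric differential operator $L_x^{\text{cs}}$ \eqref{Lcs} in the form $\Delta + (\text{lower order})$ and confirm that conjugating by $e^{\langle\xi,x\rangle}$ and expanding the $\sinh^{-2}$ potentials produces, upon substituting the ansatz \eqref{HC-cs:a}, exactly the indicial-type recurrence \eqref{HC-cs:b} with leading operator $\xi\mapsto \langle\nu-2\xi,\nu\rangle$. This is the heart of the identification and is essentially a bookkeeping computation. Second, I would note that the combinatorial finiteness already flagged in the text --- namely that $\nu-l\alpha\geq 0$ forces $l\leq \langle\nu,\rho\rangle$ --- guarantees that each coefficient is a well-defined rational function of $\xi$ whose only possible poles come from the vanishing of the indicial factor $\langle\nu-2\xi,\nu\rangle$, i.e.\ along the hyperplanes where $\langle\nu,\nu-2\xi\rangle=0$ for some $\nu>0$. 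Third, I would invoke Opdam's result \cite[Cor. 2.2, Cor. 2.10]{opd:root} (equivalently \cite[Prp. 4.2.5]{hec-sch:harmonic}, \cite[Lem. 6.5]{opd:lecture}, \cite[Thm. 1.2]{nar-pas-pus:asymptotics}) to conclude both the uniform convergence of the series on compacts of $\mathbb{C}^n_{+,\text{reg}}\times\mathbb{A}^n\times\mathbb{C}^3$ and the sharper statement that, after clearing the generic spurious singularities, the meromorphic continuation in $\xi$ has only simple poles confined to the hyperplanes of $\mathbb{C}^n\setminus\mathbb{C}^n_{+,\text{reg}}$. Here $\mathbb{C}^n_{+,\text{reg}}$ must be checked to coincide with the complement of Opdam's exceptional set, which for $BC_n$ with the present multiplicities is governed by the conditions $2\xi_j\not\in\mathbb{Z}_{>0}$ and $\xi_j\pm\xi_k\not\in\mathbb{Z}_{>0}$.

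The main obstacle will be the precise matching of normalizations in the third step: Opdam's regularity statement is phrased in terms of the reduced root system underlying $BC_n$ (type $B_n$ or $C_n$, depending on convention) together with a multiplicity function on all root orbits, and one must verify carefully that the exceptional hyperplanes in his formulation translate exactly into $\mathbb{C}^n\setminus\mathbb{C}^n_{+,\text{reg}}$ as defined in Proposition \ref{regular-domain:prp}, and that the poles are genuinely at most simple (Opdam's bound on pole order depends on the ``height'' $\langle\nu,\rho\rangle$, and the simplicity in the spectral variable is the delicate part of his argument). I would therefore spend the bulk of the verification confirming that the three-parameter multiplicity $(k_0,k_1,k_2)$ entering \eqref{a-alpha-cs} corresponds to a generic point of Opdam's parameter space where his continuation applies with the stated pole structure, so that no additional degenerations occur beyond those already recorded in $\mathbb{C}^n_{+,\text{reg}}$. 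The convergence and the eigenfunction property are comparatively routine given this identification, and the whole proposition reduces to citing Opdam once the dictionary is pinned down.
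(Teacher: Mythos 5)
Your proposal takes essentially the same route as the paper: the paper offers no independent proof of this proposition but simply records it as a known consequence of Opdam's regularity theory for the Harish-Chandra series of the $BC_n$ hypergeometric system, citing \cite[Cor. 2.2, Cor. 2.10]{opd:root}, \cite[Prp. 4.2.5]{hec-sch:harmonic}, \cite[Lem. 6.5]{opd:lecture} and \cite[Thm. 1.2]{nar-pas-pus:asymptotics}. Your additional care in matching the multiplicity dictionary and checking that Opdam's exceptional hyperplanes ($\langle\xi,\alpha^{\vee}\rangle\in\mathbb{Z}_{>0}$ for $\alpha\in R_+$, which for $BC_n$ reduce to $2\xi_j\in\mathbb{Z}_{>0}$ and $\xi_j\pm\xi_k\in\mathbb{Z}_{>0}$) coincide with $\mathbb{C}^n\setminus\mathbb{C}^n_{\mathrm{reg},+}$ is exactly the (implicit) content of the paper's citation.
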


\subsection{Calogero-Sutherland $\to$ Toda confluence}
Following  \cite{shi:limit,osh-shi:heckman-opdam}, we will now lift
 the limit transition \eqref{L:limit} between the Calogero-Sutherland Laplacian $L_x^{\text{cs}}$ \eqref{Lcs} and the Toda Laplacian with Morse term $L_x$ \eqref{L}--\eqref{A} to the level of the Harish-Chandra series.

To this end let us define for a (any) bounded  domain $U\subset\mathbb{C}^n$, the normalization factor
 \begin{equation}
 \Delta_U(\xi): = \prod_{\substack{\mu>0 \\ H_\mu\cap \overline{U} \neq \emptyset}}   \langle \mu-2\xi,\mu\rangle ,
 \end{equation}
 where $H_\mu :=\{ \xi\in\mathbb{C}^n \mid  \langle \mu-2\xi,\mu\rangle =0\}$ and $\overline{U}$ refers to the (compact) closure of $U$.
 It is clear from the recurrence relations in Eqs. \eqref{HC-cs:b}--\eqref{HC-cs:d}  that  for any $\nu\geq 0$
 the normalized Harish-Chandra coefficient $\Delta_U(\xi) a^{\text{cs}}_\nu (\xi;k_r)$  is holomorphic and bounded in $\xi$ on the bounded open connected set $U$.

\begin{proposition}[Confluent limit of the Harish-Chandra Series]\label{HCcs-lim:prp}
For any $(\xi,x,g) \in U\times\mathbb{A}^n\times\mathbb{C}$, one has that
\begin{equation}
\lim_{c\to +\infty}   e^{-c\langle \xi ,\rho\rangle} \Delta_U(\xi) \phi_\xi^{\emph{cs}}(x+c\rho;k_r^{(c)})= \Delta_U(\xi) \phi_\xi (x;g) .
\end{equation}
\end{proposition}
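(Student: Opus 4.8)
The plan is to establish the confluent limit by comparing the two Harish-Chandra series coefficient by coefficient, controlling the convergence uniformly enough to exchange the $c\to+\infty$ limit with the infinite summation over $\nu\geq 0$. First I would track how the rescaling \eqref{rescale} together with the translation \eqref{translate} acts on a single term of the Calogero-Sutherland series $\phi_\xi^{\text{cs}}(x+c\rho;k_r^{(c)})$. Writing out $e^{\langle \xi-\nu, x+c\rho\rangle}=e^{c\langle\xi-\nu,\rho\rangle}e^{\langle\xi-\nu,x\rangle}$ and multiplying by the prefactor $e^{-c\langle\xi,\rho\rangle}$, the $c$-dependence of the exponential collapses to $e^{-c\langle\nu,\rho\rangle}$. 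The claim will therefore reduce to showing
\begin{equation*}
\lim_{c\to+\infty} e^{-c\langle\nu,\rho\rangle}\,\Delta_U(\xi)\,a^{\text{cs}}_\nu(\xi;k_r^{(c)}) = \Delta_U(\xi)\,a_\nu(\xi;g)
\end{equation*}
for each fixed $\nu\geq 0$, and then upgrading this termwise convergence to convergence of the full series.

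The termwise statement I would prove by induction on $\langle\nu,\rho\rangle$, using the two recurrences \eqref{HCs:b} and \eqref{HC-cs:b}. The key is a careful analysis of the rescaled coupling data \eqref{a-alpha-cs} under \eqref{rescale}: as $c\to+\infty$ one has $k_0^{(c)}(k_0^{(c)}-1)=e^c$ and $k_2^{(c)}(k_2^{(c)}-1)=e^{2c}/16$, so the dominant contributions come from the longest roots, and the factors $l\,\text{a}^{\text{cs}}_\alpha(k_r^{(c)})$ in \eqref{HC-cs:b} must be weighted against the exponential $e^{-c\langle\nu,\rho\rangle}$. The idea is that only the terms with $l=1$ and $\alpha$ equal to one of the simple Toda directions in $S$ \eqref{S} survive the limit, because $\langle\nu,\rho\rangle - \langle\nu-l\alpha,\rho\rangle = l\langle\alpha,\rho\rangle$ must match the exponential growth rate $e^{c\cdot(\text{degree})}$ coming from $\text{a}^{\text{cs}}_\alpha$; all other terms are exponentially suppressed and drop out, leaving precisely the Toda recurrence \eqref{HCs:b} with the coupling constants \eqref{A}. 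Matching $k_1^{(c)}=2g$ against $\text{a}_{e_n}=g$ and the quadratic relations against $\text{a}_{e_j-e_{j+1}}=2$, $\text{a}_{2e_n}=\tfrac14$ is the bookkeeping that confirms the limiting recurrence is exactly the right one.

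The main obstacle will be justifying the interchange of limit and summation, i.e. promoting the termwise convergence to uniform convergence of the series on the compact set $U\times(\text{compact in }\mathbb{A}^n)\times(\text{compact in }\mathbb{C})$. For this I would seek a dominating bound of the same shape as \eqref{growth}: a constant $C>0$, uniform in $c$ for $c$ large, such that $|e^{-c\langle\nu,\rho\rangle}\Delta_U(\xi)a^{\text{cs}}_\nu(\xi;k_r^{(c)})|\leq C^{\langle\nu,\rho\rangle}/\langle\nu,\rho\rangle!$ for all $\nu\geq 0$. The delicacy is that the couplings $\text{a}^{\text{cs}}_\alpha(k_r^{(c)})$ themselves blow up like $e^{c}$ and $e^{2c}$, so the suppression $e^{-c\langle\nu,\rho\rangle}$ has to be apportioned across the recurrence to cancel this growth term by term; the normalization $\Delta_U(\xi)$ is inserted precisely to keep the coefficients holomorphic and bounded on $U$ uniformly, since the denominators $\langle\nu-2\xi,\nu\rangle$ may vanish on the singular hyperplanes $H_\mu$ inside $\overline{U}$. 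Once such a $c$-uniform majorant is in hand, dominated convergence for series finishes the proof, yielding the stated limit and simultaneously re-deriving Proposition \ref{FW-function:prp} as a byproduct of the confluence. I expect the combinatorial verification that the surviving terms reproduce \eqref{HCs:b} to be routine, whereas pinning down the uniform-in-$c$ estimate, with the correct distribution of the exponential weights, will require the most care.
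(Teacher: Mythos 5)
Your overall strategy coincides with the paper's: rewrite the translated series so that the $c$-dependence of each term collapses to $e^{-c\langle\nu,\rho\rangle}$, prove the termwise limit of the normalized coefficients by induction on $\langle\nu,\rho\rangle$ using the observation that only the $l=1$, $\alpha\in S$ contributions to the rescaled recurrence survive (this is exactly Eq.~\eqref{a-cs-t-lim}), and then pass the limit through the sum by dominated convergence. That part of your plan is sound and is essentially what the paper does.

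The gap is in the majorant you propose for the dominated-convergence step. A bound of the shape $|e^{-c\langle\nu,\rho\rangle}\Delta_U(\xi)a^{\text{cs}}_\nu(\xi;k_r^{(c)})|\leq C^{\langle\nu,\rho\rangle}/\langle\nu,\rho\rangle!$, uniform in $c$, cannot hold --- indeed it fails already at every \emph{fixed} finite $c$. Factorial decay of the coefficients would force the series $\sum_\nu \hat{a}^{\text{cs}}_\nu(\xi;k_r^{(c)})e^{\langle\xi-\nu,x\rangle}$ to converge for all $x\in\mathbb{C}^n$, whereas this series equals $e^{-c\langle\xi,\rho\rangle}\Delta_U(\xi)\phi^{\text{cs}}_\xi(x+c\rho;k_r^{(c)})$, whose domain of convergence in $x$ is genuinely bounded by the walls on which the Calogero--Sutherland potential is singular (shifted by $-c\rho$). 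The factorial decay in Eq.~\eqref{growth} is a special feature of the Toda limit, where the recurrence is driven only by the finitely many $\alpha\in S$; at finite $c$ the recurrence \eqref{HC-cs:b} involves all $\alpha\in R_+$ and all $l\geq 1$, and the best one can extract uniformly in $c$ is a geometric bound. This is what the paper proves (Eq.~\eqref{HC-cs:estimate}, following Helgason): for every $\varepsilon>0$ there is $A>0$ with $|\hat{a}^{\text{cs}}_\nu(\xi;k_r^{(c)})|\leq A\,e^{\varepsilon\langle\nu,\rho\rangle}$ for all $c\geq 0$ and $\nu\geq 0$. That weaker bound suffices precisely because the proposition restricts $x$ to the open chamber $\mathbb{A}^n$: choosing $\varepsilon$ small enough that $x\in\varepsilon\rho+\mathbb{A}^n$ makes $\sum_\nu e^{\varepsilon\langle\nu,\rho\rangle}e^{-\langle\nu,x\rangle}$ converge, giving the $c$-uniform dominating series. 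So you should replace your factorial majorant by this exponential one (and note, as a consequence, that your hoped-for re-derivation of Proposition~\ref{FW-function:prp} as a byproduct of the confluence does not come for free --- the convergence of the Toda series on all of $\mathbb{C}^n$ requires the separate, sharper estimate of Section~\ref{FW-function:prf}).
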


\begin{proof}
It is immediate from the Harish-Chandra series \eqref{HC-cs:a}--\eqref{HC-cs:d} that
\begin{equation}\label{series}
 e^{-c\langle \xi ,\rho\rangle} \Delta_U(\xi) \phi_\xi^{\text{cs}}(x+c\rho;k_r^{(c)})= \sum_{ \nu\geq 0}  \hat{a}^{\text{cs}}_\nu (\xi ;k_r^{(c)})  e^{\langle \xi -\nu ,x\rangle } ,
\end{equation}
where $\hat{a}^{\text{cs}}_\nu (\xi ;k_r^{(c)})$, $\nu\geq 0$ is determined by the rescaled recurrence
\begin{subequations}
\begin{equation}\label{rec:a}
 \langle \nu-2\xi ,\nu\rangle \hat{a}^{\text{cs}}_\nu (\xi ;k_r^{(c)}) =   \sum_{\substack{\alpha\in R_+\\ l\geq 1}} l e^{-c\, l \langle \alpha ,\rho\rangle } \text{a}^{\text{cs}}_\alpha (k_r^{(c)}) \hat{a}^{\text{cs}}_{\nu-l\alpha}  (\xi ;k_r^{(c)})\qquad (\nu >0) ,
\end{equation}
with the modified initial condition
\begin{equation}\label{rec:b}
\hat{a}^{\text{cs}}_\nu (\xi ;k_r^{(c)}) := \begin{cases} \Delta_U(\xi)  &\text{if}\ \nu =0, \\ 0&\text{if}\ \nu\not\geq 0. \end{cases} 
\end{equation}
\end{subequations}
Since in the finite sum on the RHS of the recurrence \eqref{rec:a}
\begin{equation}\label{a-cs-t-lim}
\lim_{c\to +\infty}  e^{-c\, l \langle \alpha ,\rho\rangle } \text{a}^{\text{cs}}_\alpha (k_r^{(c)})=\begin{cases}
\text{a}_\alpha & \text{if}\  \alpha\in S\ \text{and}\ l=1 ,\\
0& \text{otherwise},
\end{cases}
\end{equation}
we
recover for $c\to +\infty$ the recurrence in Eq. \eqref{HCs:b} with the initial condition \eqref{HCs:c} multiplied by $\Delta_U(\xi)$, i.e.
\begin{equation}\label{coef-lim}
\lim_{c\to +\infty}   \hat{a}^{\text{cs}}_\nu (\xi ;k_r^{(c)}) =\Delta_U(\xi) a_\nu (\xi ;g)  \quad (\forall \xi\in U,\nu \geq 0) .
\end{equation}
Moreover,  by a standard argument  (cf. Subsection \ref{HC-cs:prf} below for the precise details in the present setting) the above recurrence relations imply
that  for any $\varepsilon >0$ there exists a constant $A>0$ (depending only on  $\xi\in U$, $g\in\mathbb{C}$ and $\varepsilon$)  such that
\begin{equation}\label{HC-cs:estimate}
\forall  c\geq 0, \nu\geq 0: \quad | \hat{a}^{\text{cs}}_\nu (\xi ;k_r^{(c)})| \leq A\,  e^{\varepsilon \langle \nu ,\rho\rangle} .
\end{equation}
Upon picking $\varepsilon >0$ sufficiently small such that $x\in \varepsilon\rho +\mathbb{A}^n$, it is seen that
 the series on the RHS of Eq. \eqref{series} can be bounded term wise and uniformly in $c$ by a  absolutely convergent series:
\begin{equation}
 \sum_{ \nu\geq 0}  | \hat{a}^{\text{cs}}_\nu (\xi ;k_r^{(c)})  e^{\langle \xi -\nu ,x  \rangle } |
\leq A   | e^{\langle \xi ,x\rangle } | \sum_{ \nu\geq 0}   e^{- \langle \nu ,x-\varepsilon\rho \rangle }< +\infty .
\end{equation}
The usphot is that the asserted limit transition follows from Eqs. \eqref{series} and \eqref{coef-lim} by dominated convergence.
\end{proof}

Since the multiplication by $\Delta_U(\xi)$ regularizes the coefficients of the Harish-Chandra series \eqref{HCs:a}--\eqref{HCs:c} for $\xi\in U$, it is clear from its local absolute uniform convergence (cf. Section \ref{FW-function:prf}) that  the normalized Harish-Chandra series  $\Delta_U(\xi) \phi_\xi (x;g)$ constitutes an analytic function of $(\xi ,x,g)\in U\times \mathbb{C}^n\times\mathbb{C}$.
One concludes from Propositions \ref{opdam:prp} and \ref{HCcs-lim:prp} that for $(\xi,x,g)\in (U\cap H_\mu)\times\mathbb{A}^n\times\mathbb{C}$ ($\mu >0$)
 the function in question vanishes---and is thus divisible by $\langle \mu -2\xi,\mu\rangle$---except (possibly) when the hyperplane  $H_\mu$ belongs to
 $\mathbb{C}^n\setminus \mathbb{C}^n_{+,\text{reg}}$. But then, by analytic continuation, it is clear that in fact $\Delta_U(\xi) \phi_\xi (x;g)=0$ for
 $(\xi,x,g)\in
 (U\cap H_\mu)\times \mathbb{C}^n\times\mathbb{C}$  (if $H_\mu$, $\mu>0$ does not belong to  $\mathbb{C}^n\setminus \mathbb{C}^n_{+,\text{reg}}$).

 Since the bounded domain $U\subset\mathbb{C}^n$ was chosen arbitrarily, this means that  the fundamental Whittaker function $\phi_\xi (x;g)$ is regular in $\xi$ along the hyperplane $H_\nu$ for all
 $\nu >0$ such that $H_\nu\not\subset \mathbb{C}^n\setminus \mathbb{C}^n_{+,\text{reg}}$, and that it possesses at most a simple pole along the hyperplane $H_\nu$ for all  $\nu >0$ such that $H_\nu\subset \mathbb{C}^n\setminus \mathbb{C}^n_{+,\text{reg}}$, which
completes the proof of Proposition \ref{regular-domain:prp}. 

\subsection{Proof of the bound in Eq. \eqref{HC-cs:estimate}}\label{HC-cs:prf}
After fixing $\xi\in U$ and $g\in\mathbb{C}$, we follow the proof of  a similar bound in \cite[Ch. IV, Lem. 5.3]{hel:groups}. 
To this end let us pick $\text{a}>0$ and $N>0$ such that:
\begin{equation}
\langle \nu ,\nu -2\xi\rangle \geq \text{a}\langle \nu ,\rho\rangle^2\qquad \forall\nu \geq 0\ \text{with}\  \langle \nu,\rho\rangle \geq N
\end{equation}
(cf. Lemma \ref{estimates:lem}).
In view of  Eqs. \eqref{rescale} and \eqref{a-alpha-cs}, it is clear that
for a given $\varepsilon >0$ there exists a $C>0$ such that $\forall c\geq 0$:
\begin{equation}
\frac{1}{\text{a}} 
 \sum_{\substack{\alpha\in R_+\\ l\geq 1}}  e^{-(\varepsilon+c) l \langle \alpha ,\rho\rangle } |\text{a}^{\text{cs}}_\alpha (k_r^{(c)}) |
 \leq C.
\end{equation}
Upon fixing a positive integer $M\geq\max (N,C)$, we now pick $A>0$ sufficiently large such that $\forall c\geq 0$:
$| \hat{a}^{\text{cs}}_\nu (\xi ;k_r^{(c)}) |\leq A e^{\varepsilon \langle \nu ,\rho\rangle}$ for all $\nu\geq 0$ with $\langle\nu,\rho\rangle< M$. (The existence of  $A$ is clear from the limit in Eq. \eqref{coef-lim} and the continuity in $c$.)

One then sees inductively from the recurrence in Eq. \eqref{rec:a} that also for all $\nu\geq 0$ such that $\langle\nu,\rho\rangle\geq M$:
\begin{align*}
| \hat{a}^{\text{cs}}_\nu (\xi ;k_r^{(c)}) | &\leq
\frac{1}{\text{a} \langle \nu ,\rho\rangle^2}
  \sum_{\substack{\alpha\in R_+\\ l\geq 1}}  l e^{-c\, l \langle \alpha ,\rho\rangle } | \text{a}^{\text{cs}}_\alpha (k_r^{(c)})  \hat{a}^{\text{cs}}_{\nu -l\alpha}(\xi ;k_r^{(c)}) |   \\
&\leq
\frac{1}{\text{a} \langle \nu ,\rho\rangle}
  \sum_{\substack{\alpha\in R_+\\ l\geq 1}}  e^{-c\, l \langle \alpha ,\rho\rangle } | \text{a}^{\text{cs}}_\alpha (k_r^{(c)}) |   A e^{\varepsilon \langle \nu-l\alpha ,\rho\rangle}
  \\
& \leq {\textstyle \frac{M}{ \langle \nu ,\rho\rangle}} A e^{\varepsilon \langle \nu ,\rho\rangle} \leq A e^{\varepsilon \langle \nu ,\rho\rangle}
\end{align*}
(where we exploited that  $l\leq \langle \nu ,\rho\rangle$ if $\nu-l\alpha \geq 0 $).

\section{Proof of Theorem \ref{HWD:thm}}\label{HWD:prf}
In this section we  retrieve our difference equations for the hyperoctahedral Whittaker function from the hypergeometric difference equations
in \cite[Thm. 2]{die-ems:difference},
with the aid of the Calogero-Sutherland $\to$ Toda confluence from Section \ref{regular-domain:prf}.

\subsection{Hyperoctahedral hypergeometric function}
For $(\xi,x,k_r)\in\mathbb{C}_{\text{reg}}^n\times \mathbb{A}^n\times\mathbb{C}^3$, let us consider the following eigenfunction
of the Calogero-Sutherland Laplacian $L_x^{\text{cs}}$ \eqref{Lcs}:
\begin{subequations}
\begin{equation}\label{connection-formula-cs}
\Phi_\xi^{\text{cs}}(x;k_r):=\sum_{w\in W}  C^{\text{cs}}(w \xi;k_r)  \phi_{w\xi}^{\text{cs}}(x;k_r) ,
\end{equation}
where
\begin{align} \label{c-function-cs}
C^{\text{cs}}(\xi;k_r):=&
 \prod_{1\leq j\leq n}    \frac{\Gamma (2\xi_j) \Gamma (\frac{1}{2}k_1+\xi_j)}{\Gamma (k_1+2\xi_j) \Gamma (\frac{1}{2}k_1+k_2+\xi_j)} \\
& \times \prod_{1\leq j<k\leq n}   \frac{\Gamma (\xi_j+\xi_k ) \Gamma (\xi_j-\xi_k)}{\Gamma (k_0+\xi_j+\xi_k)\Gamma (k_0+\xi_j-\xi_k)} . \nonumber
\end{align}
\end{subequations}
Up to an overal normalization factor---depending on $x$ and $k_r$ but \emph{not} on $\xi$---this is the hyperoctahedral hypergeometric function of Heckman and Opdam
associated with the root system $BC_n$ \cite{opd:root,hec-sch:harmonic,opd:lecture} (cf. Remark \ref{HO:rem} below).

\begin{lemma}\label{Ccs-lim:lem}
For any $(\xi,g)\in\mathbb{C}_{\emph{reg},-}^n\times\mathbb{C}$ (cf. Eq. \eqref{C-}),  one has that
\begin{subequations}
\begin{equation}
\lim_{c\to +\infty}   \gamma (k_r^{(c)}) \,e^{c\langle \xi ,\rho\rangle }  C^{\emph{cs}}(\xi;k_r^{(c)}) =C (\xi;g) ,
\end{equation}
where
\begin{equation}
\gamma (k_r):=  \left( \Gamma (k_0)    \right)^{n(n-1)}  \left(  \frac{\Gamma (k_1)\Gamma (\frac{1}{2}k_1+k_2)}{\Gamma (\frac{1}{2} k_1)\Gamma (\frac{1}{2}+\frac{1}{2}k_1)} \right)^n .
\end{equation}
\end{subequations}
\end{lemma}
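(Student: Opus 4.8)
The plan is to split $C^{\text{cs}}(\xi;k_r^{(c)})$ into the Gamma factors that depend on $\xi$ alone---namely $\prod_j\Gamma(2\xi_j)$ together with $\prod_{j<k}\Gamma(\xi_j+\xi_k)\Gamma(\xi_j-\xi_k)$, which already reproduce the corresponding factors of the target $c$-function $C(\xi;g)$---and the remaining ratios whose arguments grow with $c$. The role of the normalizing constant $\gamma(k_r^{(c)})$ and the exponential prefactor $e^{c\langle\xi,\rho\rangle}$ is precisely to render these growing ratios convergent: using $k_1^{(c)}=2g$, the $n(n-1)$ copies of $\Gamma(k_0)$ in $\gamma$ pair off against the $n(n-1)$ denominator factors $\Gamma(k_0+\xi_j\pm\xi_k)$ coming from the pair interactions, while the $n$ copies of $\Gamma(\frac12 k_1+k_2)=\Gamma(g+k_2)$ pair against the $n$ denominator factors $\Gamma(g+k_2+\xi_j)$ coming from the one-body interaction.

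First I would invoke the classical ratio asymptotics $\Gamma(z+a)/\Gamma(z+b)\sim z^{a-b}$ as $z\to+\infty$ (a consequence of Stirling's formula, valid away from the poles of $\Gamma$). Applied to the off-diagonal block this gives
\[
\prod_{j<k}\frac{(\Gamma(k_0))^2}{\Gamma(k_0+\xi_j+\xi_k)\Gamma(k_0+\xi_j-\xi_k)}\sim\prod_{j<k}k_0^{-2\xi_j}=k_0^{-2\sum_{j=1}^n(n-j)\xi_j},
\]
and applied to the diagonal block it gives $\prod_j\Gamma(g+k_2)/\Gamma(g+k_2+\xi_j)\sim k_2^{-\sum_j\xi_j}$. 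The rescaling \eqref{rescale} fixes the growth rates: from $k_0(k_0-1)=e^c$ one has $k_0^2\sim e^c$, and from $k_2(k_2-1)=e^{2c}/16$ one has $k_2\sim e^c/4$; in both cases the correction ratios tend to $1$ as $c\to+\infty$.

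The next step is the bookkeeping of the powers of $e^c$. Writing $\langle\xi,\rho\rangle=\sum_j(n-j+1)\xi_j$ and combining the prefactor with the two asymptotic blocks, one finds
\[
e^{c\langle\xi,\rho\rangle}\,k_0^{-2\sum_j(n-j)\xi_j}\,k_2^{-\sum_j\xi_j}\longrightarrow\prod_{j=1}^n 4^{\xi_j},
\]
because the net exponent $\sum_j[(n-j+1)-(n-j)-1]\xi_j$ vanishes identically and only the constant $\prod_j 4^{\xi_j}$ produced by $k_2\sim e^c/4$ survives. What remains is a purely $\xi$-dependent and $c$-independent product, which I would then collapse using the Legendre duplication formula $\Gamma(2z)=\pi^{-1/2}2^{2z-1}\Gamma(z)\Gamma(z+\frac12)$: applying it once with $z=g+\xi_j$ and once with $z=g$ shows that the surviving factor $\frac{\Gamma(2g)}{\Gamma(g)\Gamma(\frac12+g)}\prod_j\frac{\Gamma(g+\xi_j)}{\Gamma(2g+2\xi_j)}\,4^{\xi_j}$ equals $\prod_j\Gamma(\frac12+g+\xi_j)^{-1}$ (the factors of $\pi^{\pm1/2}$ cancel and $2^{-2\xi_j}4^{\xi_j}=1$), which is exactly the remaining factor of $C(\xi;g)$.

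The main obstacle is not any single identity but the careful control of the limit: one must check that the subleading corrections in each Gamma ratio and in the relations $k_0^2=e^c+k_0$ and $k_2\sim e^c/4$ genuinely tend to $1$ (uniformly on compacts in $\xi$), and that the restriction $2\xi_j,\,\xi_j\pm\xi_k\notin\mathbb{Z}_{\leq 0}$ keeps all the $\xi$-only Gamma factors finite while the growing arguments $k_0+\xi_j\pm\xi_k$ and $g+k_2+\xi_j$ stay away from the poles of $\Gamma$. Once the exact cancellation of the $e^c$ powers is secured, the remaining verification reduces to the purely algebraic duplication-formula computation indicated above.
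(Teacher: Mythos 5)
Your argument is correct and follows essentially the same route as the paper: the paper's proof simply cites the Legendre duplication formula together with the ratio asymptotics $\lim_{c\to+\infty}c^{z}\Gamma(c)/\Gamma(z+c)=1$, which are exactly the two ingredients you deploy, and your bookkeeping of the powers of $e^{c}$ (via $k_0\sim e^{c/2}$, $k_2\sim e^{c}/4$ and the cancellation of the net exponent against $e^{c\langle\xi,\rho\rangle}$) fills in precisely the details the paper leaves implicit. One cosmetic remark: in your final display the factor $\Gamma(2g)/(\Gamma(g)\Gamma(\frac{1}{2}+g))$ should appear raised to the $n$-th power, one copy per index $j$, as your own prose correctly indicates.
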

\begin{proof}
This lemma is immediate from the   duplication formula for the gamma function
 \cite[5.5.5]{olv-loz-boi-cla:nist}
 \begin{equation*}
 \Gamma (2z)=\pi^{-\frac{1}{2}} 2^{2z-1}\Gamma (z)\Gamma \textstyle{(\frac{1}{2}}+z)
 \end{equation*}
($2z\not\in\mathbb{Z}_{\leq 0}$) and the asymptotics
 \cite[5.11.12]{olv-loz-boi-cla:nist}
 \begin{equation*}
 \lim_{c\to +\infty}   \frac{c^z \Gamma (c)  }{\Gamma (z+c)} =1.
 \end{equation*}
 \end{proof}
 The upshot is that the confluence limit for the Harish-Chandra series in  Proposition \ref{HCcs-lim:prp} persists at the level of the connection formulas.
 
 \begin{proposition}[Confluent limit of the Connection Formula]\label{CFcs-lim:prp}
For any $(\xi,x,g)\in \mathbb{C}^n_{\emph{reg}}\times\mathbb{A}^n\times\mathbb{C}$, one has that
\begin{equation}\label{CFcs-lim:eq}
\lim_{c\to +\infty} \gamma (k_r^{(c)})  \Phi_\xi^{\emph{cs}}(x+c\rho;k_r^{(c)})= \Phi_\xi (x;g) .
\end{equation}
\end{proposition}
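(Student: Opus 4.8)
The plan is to substitute the connection formula \eqref{connection-formula-cs} into the left-hand side of \eqref{CFcs-lim:eq} and to exploit that the hyperoctahedral group $W$ is finite, so that it suffices to analyze the limit of each individual summand. Writing each term as a product of a $c$-function factor and a rescaled Harish-Chandra factor,
\begin{equation*}
\gamma(k_r^{(c)})\, C^{\text{cs}}(w\xi;k_r^{(c)})\,\phi_{w\xi}^{\text{cs}}(x+c\rho;k_r^{(c)})
=\bigl[\gamma(k_r^{(c)})e^{c\langle w\xi,\rho\rangle}C^{\text{cs}}(w\xi;k_r^{(c)})\bigr]\,
\bigl[e^{-c\langle w\xi,\rho\rangle}\phi_{w\xi}^{\text{cs}}(x+c\rho;k_r^{(c)})\bigr] ,
\end{equation*}
the compensating exponentials $e^{\pm c\langle w\xi,\rho\rangle}$ cancel, so the limit of the product equals the product of the limits once both are shown to exist. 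I would first note that $\mathbb{C}^n_{\text{reg}}$ is invariant under the signed-permutation action, whence $w\xi\in\mathbb{C}^n_{\text{reg}}$ lies in the regularity domains of Lemma \ref{Ccs-lim:lem}, Proposition \ref{opdam:prp}, and Proposition \ref{regular-domain:prp} for every $w\in W$. In particular Lemma \ref{Ccs-lim:lem} applies to the bracketed $c$-function factor and yields its convergence to $C(w\xi;g)$.

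The core of the argument is the Harish-Chandra factor, for which the target is $\lim_{c\to+\infty}e^{-c\langle w\xi,\rho\rangle}\phi_{w\xi}^{\text{cs}}(x+c\rho;k_r^{(c)})=\phi_{w\xi}(x;g)$. When $w\xi\in\mathbb{U}^n$ this is immediate: choosing the bounded domain $U$ in Proposition \ref{HCcs-lim:prp} to be a small ball around $w\xi$ whose closure stays inside $\mathbb{U}^n$ turns the normalization $\Delta_U$ into an empty product $\Delta_U\equiv 1$, and the proposition gives the claim at once. For those $\xi$ in the complement of the dense open set on which \emph{all} $w\xi$ lie in $\mathbb{U}^n$, the point $w\xi$ may sit on some hyperplane $H_\mu$ ($\mu>0$), so that $\Delta_U(w\xi)=0$; yet $w\xi\in\mathbb{C}^n_{\text{reg}}$ still guarantees, through Propositions \ref{opdam:prp} and \ref{regular-domain:prp}, that both $\phi_{w\xi}^{\text{cs}}(\,\cdot\,;k_r^{(c)})$ and $\phi_{w\xi}(\,\cdot\,;g)$ are holomorphic at $w\xi$. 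Here I would take $U$ a small ball centred at $w\xi$ with $\overline{U}$ inside the common regularity domain, recall that the uniform coefficient bound \eqref{HC-cs:estimate} underlying Proposition \ref{HCcs-lim:prp} in fact yields \emph{locally uniform} convergence on $U$ of the normalized functions $F_c(\xi'):=e^{-c\langle\xi',\rho\rangle}\Delta_U(\xi')\phi_{\xi'}^{\text{cs}}(x+c\rho;k_r^{(c)})$ towards $\Delta_U(\xi')\phi_{\xi'}(x;g)$, and then divide by the nonzero polynomial $\Delta_U$.

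Dividing by $\Delta_U$ immediately produces locally uniform convergence $F_c/\Delta_U\to\phi_{\,\cdot\,}(x;g)$ on the dense open set $U\setminus\{\Delta_U=0\}$, but not yet at $w\xi$ itself. To cross the zero locus I would verify that the holomorphic functions $F_c/\Delta_U=e^{-c\langle\,\cdot\,,\rho\rangle}\phi_{\,\cdot\,}^{\text{cs}}(x+c\rho;k_r^{(c)})$ are \emph{locally uniformly bounded} near $w\xi$: from the uniform-in-$c$ bound on $F_c$ supplied by \eqref{HC-cs:estimate} together with $x\in\mathbb{A}^n$, a one-variable Cauchy-integral estimate along a generic complex line through $w\xi$ (on which $\Delta_U$ is not identically zero, using a circle of radius avoiding its finitely many zeros) bounds $F_c/\Delta_U$ uniformly in $c$ on a neighbourhood of $w\xi$. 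With local uniform boundedness in hand and pointwise convergence already known on the dense set $U\setminus\{\Delta_U=0\}$, Vitali's theorem forces locally uniform convergence of $F_c/\Delta_U$ on all of $U$ to a holomorphic limit, which therefore coincides with $\phi_{\,\cdot\,}(x;g)$; evaluating at $\xi'=w\xi$ gives the desired limit of the Harish-Chandra factor.

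Multiplying the two limits and summing the finitely many terms over $w\in W$ then yields \eqref{CFcs-lim:eq} for every $\xi\in\mathbb{C}^n_{\text{reg}}$. I expect the passage across the zero locus of $\Delta_U$ — converting the $\Delta_U$-normalized convergence of Proposition \ref{HCcs-lim:prp} into the unnormalized statement at the possibly singular-looking but genuinely regular point $w\xi$ — to be the main obstacle; everything else reduces to assembling Lemma \ref{Ccs-lim:lem}, Proposition \ref{HCcs-lim:prp}, and the finiteness of $W$.
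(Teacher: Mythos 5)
Your proposal is correct and follows essentially the same route as the paper, whose proof of Proposition \ref{CFcs-lim:prp} is a one-line appeal to exactly the two ingredients you use: Lemma \ref{Ccs-lim:lem} for the $c$-function factors and Proposition \ref{HCcs-lim:prp} for the Harish-Chandra factors, summed over the finite group $W$ via the connection formula. The only difference is that your Vitali/maximum-principle argument explicitly justifies passing from the $\Delta_U$-normalized convergence to the unnormalized pointwise limit at those $w\xi\in\mathbb{C}^n_{\text{reg}}$ lying on some hyperplane $H_\mu$ (where $\Delta_U(w\xi)=0$) --- a genuine subtlety, since $\mathbb{C}^n_{\text{reg}}\not\subset\mathbb{U}^n$, which the paper's ``it is clear'' leaves implicit.
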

\begin{proof}
It is clear from  Proposition \ref{HCcs-lim:prp}, Lemma \ref{Ccs-lim:lem} and the connection formula in
Eqs \eqref{connection-formula-cs}, \eqref{c-function-cs}, that the limit on the LHS of Eq. \eqref{CFcs-lim:eq} reproduces the connection formula
for the hyperoctahedral  Whittaker function in Eqs \eqref{connection-formula}, \eqref{c-function}.
\end{proof}

For reduced root systems such confluences from the hypergeometric function of Heckman and Opdam to the class-one Whittaker function studied by Hashizume were established by Shimeno \cite{shi:limit}. In the case of a single variable ($n=1$), the limit in Proposition \ref{CFcs-lim:prp} was detailed in \cite[\S 2]{osh-shi:heckman-opdam}.

\begin{remark}\label{HO:rem}
The precise relation between $\Phi_\xi^{\text{cs}}(x;k_r) $ \eqref{connection-formula-cs}, \eqref{c-function-cs} and the hyperoctahedral hypergeometric function
$ F_{BC_n}(\xi,x;k_r)$ of Heckman and Opdam  \cite{opd:root,hec-sch:harmonic,opd:lecture}  is given by:
\begin{subequations}
\begin{equation}
\Phi_\xi^{\text{cs}}(x;k_r) =  { \delta (x;k_r)}{C^{\text{cs}}(\rho(k_r) ;k_r)}     F_{BC_n}(\xi,x;k_r)
\end{equation}
($(\xi,x)\in\mathbb{C}_{\text{reg}}^n\times \mathbb{A}^n$), where
\begin{align}
\delta (x;k_r) :=& \prod_{1\leq j\leq n}   (e^{\frac{1}{2} x_j}  -  e^{-\frac{1}{2} x_j} )^{k_1} (e^{x_j}  -  e^{-x_j} )^{k_2}  \\
&\times \prod_{1\leq j <k\leq n}      (e^{\frac{1}{2} (x_j+x_k)}  -  e^{-\frac{1}{2} (x_j+x_k)} )^{k_0}   (e^{\frac{1}{2} (x_j-x_k)}  -  e^{-\frac{1}{2} (x_j-x_k)} )^{k_0}
\nonumber
\end{align}
and
\begin{equation}
\rho (k_r):={\textstyle ((n-1)k_0+\frac{1}{2}k_1+k_2,(n-2)k_0+\frac{1}{2}k_1+k_2,\ldots,\frac{1}{2}k_1+k_2)} .
\end{equation}
\end{subequations}
\end{remark}

\subsection{Hypergeometric difference equations}
In view of Remark \ref{HO:rem}, it follows
from \cite[Thm. 2]{die-ems:difference}  that
for any $x\in\mathbb{A}^n$ \eqref{chamber}, $k_r\in\mathbb{C}$ ($r=0,1,2$) and $\ell \in \{ 1,\ldots ,n\}$,
the normalized hypergeometric function $\Phi^{\text{cs}}_\xi(x;k_r)$ \eqref{connection-formula-cs}, \eqref{c-function-cs} satisfies the difference equation\footnote{For $\ell=1$, this hypergeometric difference equation had been found before by Chalykh, cf.  \cite[Thm 6.12]{cha:bispectrality}.}

{\small
\begin{equation*}
\sum_{\substack{J\subset \{ 1,\ldots ,n\} ,\, 0\leq|J|\leq \ell\\
               \varepsilon_j=\pm 1,\; j\in J}}
\!\!\!\!\!\!\!\!\!
U^{\text{cs}}_{J^c,\, \ell -|J|}(\xi ;k_r)
V^{\text{cs}}_{\varepsilon J}(\xi ;k_r)
\Phi^{\text{cs}}_{\xi +e_{\varepsilon J}} (x;k_r) =E_\ell (x) \Phi^{\text{cs}}_{\xi} (x;k_r)
\end{equation*}}
(as an identity between meromorphic functions of $\xi \in \mathbb{C}^n$), where
{\small
\begin{align*}
V^{\text{cs}}_{\varepsilon J}(\xi ;k_r)&:=
\prod_{j\in J} 
\frac{(\varepsilon_j\xi_j+\frac{1}{2}k_1+k_2)(1+2\varepsilon_j\xi_j+k_1)}{\varepsilon_j\xi_j(1+2\varepsilon_j\xi_j)}
\prod_{\substack{j\in J\\ k\not\in J}} 
\Bigl(\frac{\varepsilon_j\xi_j+\xi_{k}+k_0}{\varepsilon_j\xi_j+\xi_{k}}\Bigr)\Bigl(\frac{\varepsilon_j\xi_j-\xi_{k}+k_0}{\varepsilon_j\xi_j-\xi_{k}}\Bigr)
\nonumber \\
&  \times
\prod_{\substack{j,j^\prime \in J\\ j<j^\prime}}
\Bigl(\frac{\varepsilon_j\xi_j+\varepsilon_{j^\prime}\xi_{j^\prime}+k_0}{\varepsilon_j\xi_j+\varepsilon_{j^\prime}\xi_{j^\prime}}\Bigr)
\Bigl(\frac{1+\varepsilon_j\xi_j+\varepsilon_{j^\prime}\xi_{j^\prime}+k_0}{1+\varepsilon_j\xi_j+\varepsilon_{j^\prime}\xi_{j^\prime}}\Bigr) ,
\end{align*}
\begin{align*}
U^{\text{cs}}_{K,p}(\xi ;k_r):=
 (-1)^p
\sum_{\stackrel{I\subset K,\, |I|=p}
               {\varepsilon_i =\pm 1,\; i\in I }}
&\Biggl( \prod_{i\in I} 
\frac{(\varepsilon_i\xi_i+\frac{1}{2}k_1+k_2)(1+2\varepsilon_i\xi_i+k_1)}{\varepsilon_i\xi_i(1+2\varepsilon_i\xi_i)} \nonumber \\
&\times \prod_{\substack{i\in I\\ k\in K\setminus I}} 
\Bigl(\frac{\varepsilon_i\xi_i+\xi_{k}+k_0}{\varepsilon_i\xi_i+\xi_{k}}\Bigr)\Bigl(\frac{\varepsilon_i\xi_i-\xi_{k}+k_0}{\varepsilon_i\xi_i-\xi_{k}}\Bigr)
\nonumber \\
&  \times
\prod_{\substack{i,i^\prime \in I\\ i<i^\prime}}
\Bigl(\frac{\varepsilon_i\xi_i+\varepsilon_{i^\prime}\xi_{i^\prime}+k_0}{\varepsilon_i\xi_i+\varepsilon_{i^\prime}\xi_{i^\prime}}\Bigr)
\Bigl(\frac{1+\varepsilon_i\xi_i+\varepsilon_{i^\prime}\xi_{i^\prime}-k_0}{1+\varepsilon_i\xi_i+\varepsilon_{i^\prime}\xi_{i^\prime}}\Bigr) \Biggr) ,\end{align*}}
 and
{\small
\begin{equation*}
E_\ell (x):=4^{\ell }\sum_{\substack{J\subset \{ 1,\ldots, n\} \\ |J|=\ell}}   \prod_{j\in J}  \sinh^2\left(\frac{x_j}{2}\right) .
\end{equation*}}

If we now pick $(\xi ,x,g)\in \mathbb{C}^n_{\text{reg}}\times \mathbb{A}^n\times\mathbb{C}$ and
perform the substitution \eqref{rescale}, \eqref{translate} into the above hypergeometric identities,
then---after multiplication of both sides by  an overall scaling factor of the form $e^{-\frac{c}{2}\ell (2n+1-\ell)} \gamma (k_r^{(c)})$---the difference equations in Theorem \ref{HWD:thm} are recovered in the limit $c\to +\infty$. Indeed, this is immediate from
Proposition \ref{CFcs-lim:prp} and the elementary limits
\begin{align*}
\lim_{c\to +\infty}  e^{-\frac{c}{2}|J| (2n+1-|J|)} V^{\text{cs}}_{\varepsilon J}(\xi ;k_r^{(c)})&=V_{\varepsilon J}(\xi ), \\
\lim_{c\to +\infty}  e^{-\frac{c}{2} p (2|K|+1-p)}  U^{\text{cs}}_{K,p}(\xi ;k_r^{(c)}) &= U_{K,p}(\xi) ,\\
\lim_{c\to +\infty}  e^{-\frac{c}{2}\ell (2n+1-\ell)}   E_\ell (x+c\rho)&=  e^{x_1+\cdots +x_\ell},
\end{align*}
upon noting that $|J| (2n+1-|J|)+p (2|K|+1-p)=\ell (2n+1-\ell)$ when $|K|=n-|J|$ and $p=\ell-|J|$.
This completes the proof of Theorem \ref{HWD:thm} for $x\in\mathbb{A}^n$ \eqref{chamber}. The extension to arbitrary $x\in\mathbb{C}^n$ is plain by analytic continuation, as
$\Phi_\xi (x;g)$ \eqref{connection-formula}, \eqref{c-function} constitutes an entire function of $x$ (cf. Propositions \ref{FW-function:prp} and \ref{regular-domain:prp}).

\bibliographystyle{amsplain}

\end{document}